\newtheorem{assumption}{Assumption}
\def\fin{\ifmmode{\Large$\diamond$}\else{\unskip\nobreak\hfil
    \penalty50\hskip1em\null\nobreak\hfil{\Large$\diamond$}
    \parfillskip=0pt\finalhyphendemerits=0\endgraf}\fi}
\def\be#1#2\ee{\begin{equation}\label{eq:#1}#2\end{equation}}
\def\req#1{{\rm(\ref{eq:#1})}}
\def\bdm  {\begin{displaymath}}
  \def\edm  {\end{displaymath}}
\def\bdmal{\begin{displaymath}\begin{aligned}}
    \def\edmal{\end{aligned}\end{displaymath}}
\mathchardef\PhiG="0108
\newcommand{\A}{{\cal A}}
\renewcommand{\L}{{\mathscr L}}
\newcommand{\N}{{\mathord{\mathbb N}}}
\newcommand{\R}{{\mathord{\mathbb R}}}
\newcommand{\B}{{\cal B}}
\renewcommand{\L}{{\mathscr L}}
\newcommand{\norm}[1]{\|#1\|}
\newcommand{\rmd}{\,\mathrm{d}}
\newcommand{\rmi}{\mathrm{i}}
\newcommand{\eps}{\varepsilon}
\def\req#1{{\rm(\ref{eq:#1})}}
\newcommand{\dupdots}{\mathinner{\mkern1mu\raise\p@
    \vbox{\kern7\p@\hbox{.}}\mkern2mu
    \raise4\p@\hbox{.}\mkern2mu\raise7\p@\hbox{.}\mkern1mu}}
\newcommand{\gG}{\mathcal{G}}
\newcommand{\gC}{\mathcal{C}}
\newcommand{\gT}{\mathcal{T}}
\newcommand{\gF}{\mathcal{F}}
\newcommand{\gZ}{\mathcal{Z}}
\newcommand{\gZNx}{\mathcal{Z}_N^\times}
\newcommand{\gCC}{\mathfrak{C}}
\newcommand{\gTT}{\mathfrak{T}}
\newcommand{\gFF}{\mathfrak{F}}
\newcommand{\gZZ}{\mathfrak{Z}}
\newcommand{\gZZNx}{\mathfrak{Z}_N^{\times}}
 {\endMakeFramed}
\newcommand{\fbhat}{{\widehat{|f|}}}
\newcommand{\Ghat}{{\widehat G}}
\newcommand{\utilde}{{\widetilde u}}
\newcommand{\dtilde}{{\widetilde d}}
\newcommand{\ktilde}{{\widetilde k}}
\newcommand{\phitilde}{{\widetilde\varphi}}
\newcommand{\rhotilde}{{\widetilde\rho}}
\newcommand{\omtilde}{{\widetilde\omega}}
\newcommand{\V}{{{\mathscr V}_u}}
\newcommand{\Y}{{\mathscr Y}}
\newcommand{\I}{{\cal I}}
\newcommand{\J}{{\cal J}}
\newcommand{\K}{{\cal K}}
\newcommand{\RR}{{\boldsymbol{R}}}
\newcommand{\dxi}{\,\rmd \xi}
\newcommand{\dr}{\,\rmd r}
\newcommand{\dR}{\,\rmd\!R}
\newcommand{\dRR}{\,\rmd\!\RR}
\newcommand{\uu}{{u_*}}
\newcommand{\uo}{{u^*}}
\newcommand{\rs}{s}
\newcommand{\cbeta}{c_\beta}
\newcommand{\Cbeta}{C_\beta}
\renewcommand\@biblabel[1]{#1.}
\title{Fr\'echet differentiability of molecular distribution functions II.
       The Ursell function}
\author{Martin Hanke\thanks{Institut f\"ur Mathematik, Johannes
    Gutenberg-Universit\"at Mainz, 55099 Mainz, Germany
    ({\tt hanke@math.uni-mainz.de}). The research leading to this work
    has been done within the 
    Collaborative Research Center TRR 146; corresponding funding 
    by the DFG is gratefully acknowledged.}}
\begin{document}
\sloppy
\maketitle

\begin{abstract}
For a grand canonical ensemble of classical point-like particles at equilibrium
in continuous space we investigate the functional relationship between 
a stable and regular pair potential describing the interaction of the 
particles and the thermodynamical limit of the Ursell or pair correlation 
function. 
For certain admissible perturbations of the pair potential and sufficiently 
small activity we rigorously establish Frechet differentiability 
of the Ursell function in the $L^1$ norm.

Furthermore, concerning the thermodynamical limit of the pair distribution
function we explicitly compute its Fr\'echet derivative 
as a sum of a multiplication operator and an integral operator.
\end{abstract}

\begin{keywords}
  Statistical mechanics, cluster expansion, molecular distribution function, 
  Ursell function, radial distribution function, Fr\'echet derivative
\end{keywords}

\begin{AMS}
  {\sc 82B21, 82B80}
\end{AMS}

\hspace*{-0.7em}
{\footnotesize \textbf{Last modified.} \today}

\pagestyle{myheadings}
\thispagestyle{plain}
\markboth{M. HANKE}
{FRECHET DIFFERENTIABILITY OF MOLECULAR DISTRIBUTION FUNCTIONS}


\section{Introduction}
\label{Sec:Introduction}
We study a continuous system of identical classical particles in a 
grand canonical ensemble, where the potential energy is determined
by a pair potential which only depends on the distance of the interacting
particles. In the first part of this work~\cite{Hank16a} we have shown 
that in the thermodynamical limit the corresponding equilibrium  
molecular distribution functions are differentiable in $L^\infty$
with respect to the pair potential. It is well-known, however, that the
correlations between individual observations of particles become small as the 
distance between the observation points gets large. For example, the so-called
\emph{pair correlation function} or (second order) \emph{Ursell function}, 
which describes the correlations
between the occurrence of particles at two different points in space
is known to be close to zero (no correlation) for distant points,
and the rate of decay is strong enough to guarantee that the
thermodynamical limit of the pair correlation function is integrable 
over the entire space.
We mention in passing that the 
pair correlation function is important for physical chemistry applications 
(cf., e.g., R\"uhle et al.~\cite{RJLKA09})
because this is a measurable structural quantity that gives
insight into the type of underlying potential.
 
One may question whether the Fr\'echet derivative of the 
pair correlation function 
(with respect to the potential) maps also continuously into the space of
integrable functions; this does not follow from the $L^\infty$ analysis
of the first part of this work and, in fact, it does not seem possible to 
prove this with the techniques utilized in \cite{Hank16a}.
We therefore use a different argument in this paper based on 
cluster expansions of the Ursell functions. 

The same approach is subsequently
used to derive integral operator representations of the Fr\'echet derivatives
of the thermodynamical limits of the singlet and pair molecular distribution
functions, which can easily be reassembled to obtain the
derivative of the pair correlation function, when necessary. 
Among other applications such a representation may open a door to 
investigate invertibility of these derivatives.

The outline of this paper is as follows. In the following section we state
the basic assumptions on the pair potential and its perturbations, and
briefly review the main results from \cite{Hank16a}. 
Then, in Section~\ref{Sec:Cluster} we summarize classical results about cluster
expansions of grand canonical quantities such as the molecular distribution
functions and the Ursell functions; here we also recollect basic
properties of the pair correlation function, such as integrability and 
asymptotic behavior at infinity. Section~\ref{Sec:corrfcts} is devoted to
upper bounds for certain higher order correlation functions; this section 
can be skipped by readers who are only interested in our main result
on differentiability, stated and proved
in Section~\ref{Sec:L1derivative} (Theorem~\ref{Thm:Henderson-derivative}).
The estimates from Section~\ref{Sec:corrfcts} are revisited in 
Section~\ref{Sec:operator}, where they are utilized to justify the explicit
computation of the thermodynamical limit representation of the 
Fr\'echet derivative of the pair distribution function.

\section{Background}
\label{Sec:Background}
Let $\Lambda\subset\R^3$ be a bounded cubical box centered at the origin,
and $R_i\in\Lambda$, $i=1,2,\dots$, be the coordinates of the individual 
particles of a grand canonical ensemble in $\Lambda$. Repeatedly we use 
the notation
\bdm
   \RR_N = (R_1,\dots,R_N) \qquad \text{and} \qquad
   \RR_{n,N} = (R_{n+1},\dots,R_N)
\edm
for the coordinates of (some of) the particles of the entire ensemble.
When the system is in thermal equilibrium the $m$ particle distribution 
function given by
\be{rho-m}
   \rho^{(m)}_\Lambda(\RR_m)
   \,=\, \frac{1}{\Xi_\Lambda}
         \sum_{N=m}^\infty \frac{z^N}{(N-m)!}
         \int_{\Lambda^{N-m}}\!\! 
            e^{-\beta U_N(\RR_N)}
         \dRR_{m,N}
\ee
describes -- up to proper normalization --
the probability density of observing $m$ particles simultaneously
at the coordinates $R_1,R_2,\dots,R_m\in\Lambda$. 
In \req{rho-m} 
\be{pairpotential}
   U_N(R_1,\dots,R_N) \,=\, \!\!\sum_{1\leq i<j\leq N} u(|R_i-R_j|)
\ee
is the potential energy of a configuration of an $N$ particle system,
assuming that the interactions between the particles can be described 
by a pair potential
and that these interactions only depend on their mutual distances.
Furthermore, in \req{rho-m} $\beta>0$ is the inverse temperature, 
$z>0$ is the \emph{activity}, and
\be{Xi}
   \Xi_\Lambda \,=\, 
   \sum_{N=0}^\infty \frac{z^N}{N!}\int_{\Lambda^N} e^{-\beta U_N(\RR_N)}\dRR_N
\ee
is the associated \emph{grand canonical partition function}. 

Following \cite{Hank16a} we declare the pair potential $u:\R^+\to\R$ 
to satisfy the following assumption.

\begin{assumption}
\label{Ass:u}
There exists $\rs>0$ and positive decreasing functions $\uu,\uo:\R^+\to\R$ 
with
\begin{align*}
   \int_0^\rs \uu(r)\,r^2\!\dr \,=\, \infty \qquad &\text{and} \qquad 
   \int_\rs^\infty \uo(r)\,r^2\!\dr \,<\, \infty\,,
\intertext{such that $u$ satisfies}
   u(r) \,\geq\, \uu(r)\,, \ \ r \,\leq\, \rs\,,\qquad &\text{and} \qquad
   |u(r)| \,\leq\, \uo(r)\,, \ \ r \,\geq\, \rs\,.
\end{align*}
\end{assumption}

We also introduce the Banach space $\V$ of \emph{perturbations}
of $u$ as the set of functions $v$ for which the corresponding norm
\be{V}
   \norm{v}_{\V} \,=\, 
   \max\{\,\norm{v/u}_{(0,s)},\norm{v/\uo}_{(s,\infty)}\,\}
\ee
is finite\footnote{If $\Omega\subset\R^d$ is a domain then 
$\norm{\,\cdot\,}_\Omega$ denotes the supremum norm over $\Omega$.}. 
With these prerequisites it has been shown in \cite{Hank16a} that 
for any $0<t_0<1$ the following three properties hold true for all 
perturbed potentials $\utilde=u+v$ with $v\in\V$, $\norm{v}_\V\leq t_0$,
the respective quantities being independent of the particular choice of $v$:
\begin{itemize}
\item[(i)] there exists $B>0$ such that
\be{B}
   \sum_{1\leq i<j\leq N} \!\! \utilde(|R_i-R_j|) \,\geq\, -BN
\ee
for every configuration of $N$ particles and every $N\in\N$, i.e.,
$\utilde$ is \emph{stable};
\item[(ii)] for every $m\in\N$ and $\RR_m\in\Lambda^m$ there exists 
an index $j^*(\RR_m)$ such that
\be{jstar}
   \sum_{i=1\atop \,\,\,i\neq j^*}^m \utilde(|R_i-R_{j^*}|) \,\geq\, -2B
\ee
with the same constant $B$ as in \req{B};
\item[(iii)]
there exists $\cbeta>0$ with
\be{cbeta}
   4\pi\int_0^\infty |e^{-\beta \utilde(r)}-1|\,r^2\!\dr \,\leq\, \cbeta\,,
\ee
i.e., $\utilde$ is \emph{regular}.
\end{itemize}
As a consequence it follows from classical results by 
Ruelle~\cite{Ruel69} that for any of these potentials the
associated molecular distribution functions $\rhotilde_\Lambda^{(m)}$, $m\in\N$,
converge compactly to a bounded limiting function $\rhotilde^{(m)}$
as the size of $\Lambda$ grows to infinity, provided that the activity
is sufficiently small, i.e., that
\be{z}
   0 \,<\, z \,<\,\frac{1}{\cbeta e^{2\beta B+1}}\,.
\ee
This is known as the \emph{thermodynamical limit}. 

In \cite{Hank16a} it has further been shown that the molecular distribution
functions $\rho_\Lambda^{(m)}$ as well as their thermodynamical limits
have Fr\'echet derivatives
\bdm
   \partial\rho_\Lambda^{(m)}\in\L(\V,\L^\infty(\Lambda^m))
   \qquad \text{and} \qquad
   \partial\rho^{(m)}\in\L(\V,\L^\infty((\R^3)^m))
\edm
with respect to $u$,
and for a given $v\in\V$ the directional derivatives
$\bigl(\partial\rho_\Lambda^{(m)})v$ converge compactly to 
$\bigl(\partial\rho^{(m)})v$ in the thermodynamical limit, uniformly for
$\norm{v}_\V\leq 1$.

In this paper we focus on the pair correlation function
\be{Ursell2}
   \omega_\Lambda^{(2)}(R_1,R_2) 
   \,=\, \rho_\Lambda^{(2)}(R_1,R_2)
         \,-\, \rho_\Lambda^{(1)}(R_1)\rho_\Lambda^{(1)}(R_2)\,.
\ee
In the thermodynamical limit the pair correlation function converges 
(compactly) to a function $\omega^{(2)}$ that only depends on $|R_2-R_1|$, 
and which is related to the so called
\emph{radial distribution function} $g$ via
\bdm
   g(r) \,=\, 1 \,+\, \frac{1}{\rho_0^2}\,\omega^{(2)}(R,0)\,, \qquad |R|=r\,,
\edm
where $\rho_0=\lim_{|\Lambda|\to\infty}\rho_\Lambda^{(1)}$ 
is the (constant) \emph{counting density} of the system.

From our aforementioned results it follows that $\omega^{(2)}(R,0)$ as
a function of $R\in\R^3$ is Fr\'echet differentiable in 
$\L(\V,L^\infty(\R^3))$. However, this function is also known 
to belong to $L^1(\R^3)$, cf.~\cite{Ruel69}, and to converge to zero for 
$|R|\to\infty$ under mild additional assumptions on $u$, 
cf.~\cite{Penr67,Groe67,Groe67b,PoUe09} and Section~\ref{Sec:Cluster} below.
This decay at infinity is not taken into account when studying the 
distribution functions in $L^\infty$ as has been done in \cite{Hank16a};
therefore the purpose of this paper is to extend our results and to prove 
that $\omega^{(2)}(\,\cdot\,,0)$ has a Fr\'echet derivative
$\partial\omega^{(2)}\in\L(\V,L^1(\R^3))$.

Our method of proof utilizes classical graph theoretical cluster expansions
that have been developed in the aforementioned
papers to derive appropriate bounds for the pair correlation function.
We will summarize the corresponding ingredients in the following section.

\section{Cluster expansions}
\label{Sec:Cluster}
A graph $\gG$ is a set of (undirected) bonds $(i,j)$ between labeled vertices, 
where $(i,j)\in\gG$ means that there is a bond connecting vertices 
$\#i$ and $\#j$. In our applications vertex $\#i$ corresponds to the 
$i$th particle of the grand canonical ensemble and its coordinates 
$R_i\in\Lambda$; a bond $(i,j)\in\gG$ is associated with a 
certain interaction of the corresponding two particles, either given by
\be{fij}
  f_{ij} \,=\, f(R_i-R_j)\,, \qquad R_i,R_j\in\R^3\,,
\ee
where
\be{Mayer}
   f(R) \,=\, e^{-\beta u(|R|)} - 1
\ee 
is the so-called \emph{Mayer $f$-function}, or by the absolute values of 
$f_{ij}$. We refer to Stell~\cite{Stel64} as a general reference and for
a detailed exposition of graph theory in statistical mechanics.

For our results three types of graphs are relevant. First comes the set of
\emph{connected graphs}: in a connected graph
every pair of vertices has a connecting path of bonds between them. 
Connected graphs can be used to specify the sequence 
$(\omega_\Lambda^{(m)})_{m\geq 1}$ of Ursell functions, i.e.,
\be{qm-graph}
   \omega_\Lambda^{(m)}(\RR_m)
   \,=\, \sum_{N=m}^\infty \frac{z^N}{(N-m)!}\,
         \sum_{\gC_N} \int_{\Lambda^{N-m}} 
               \Bigl(\!\!\!\!\prod_{(i,j)\in\gC_N}\!\!\!\!f_{ij}\Bigr) 
               \dRR_{m,N}\,,
\ee
where the sum varies over all connected graphs $\gC_N$ with 
$N$ vertices labeled $\#1$ through $\#N$.
The second Ursell function has already been introduced in \req{Ursell2}
and the first one can be shown to coincide with $\rho_\Lambda^{(1)}$,
compare~\req{rhom-graph} below; further examples that we need later on are
\be{Ursell3}
\begin{aligned}
   \omega_\Lambda^{(3)}(R_1,R_2,R_3)
   &\,=\, \rho_\Lambda^{(3)}(R_1,R_2,R_3)
          \,-\, \rho_\Lambda^{(2)}(R_2,R_3)\rho_\Lambda^{(1)}(R_1) \\[1ex]
   &\qquad \quad
          \,-\, \omega_\Lambda^{(2)}(R_1,R_3)\rho_\Lambda^{(1)}(R_2)
          \,-\, \omega_\Lambda^{(2)}(R_1,R_2)\rho_\Lambda^{(1)}(R_3)
\end{aligned}
\ee
and
\be{Ursell4}
\begin{aligned}
   &\omega_\Lambda^{(4)}(R_1,R_2,R_3,R_4)
    \,=\, \rho_\Lambda^{(4)}(R_1,R_2,R_3,R_4) 
          \,-\, \rho_\Lambda^{(2)}(R_1,R_2)\rho_\Lambda^{(2)}(R_3,R_4)\\[1ex]
   &\qquad
          \,-\, \omega_\Lambda^{(3)}(R_1,R_2,R_3)\rho_\Lambda^{(1)}(R_4)
          \,-\, \omega_\Lambda^{(2)}(R_1,R_4)\omega_\Lambda^{(2)}(R_2,R_3)
          \\[1ex]
   &\qquad
          \,-\, \omega_\Lambda^{(3)}(R_1,R_2,R_4)\rho_\Lambda^{(1)}(R_3)
          \,-\, \omega_\Lambda^{(2)}(R_1,R_3)\omega_\Lambda^{(2)}(R_2,R_4)
          \\[1ex]
   &\qquad
          \,-\, \rho_\Lambda^{(3)}(R_1,R_3,R_4)\rho_\Lambda^{(1)}(R_2)
          \,-\, \rho_\Lambda^{(3)}(R_2,R_3,R_4)\rho_\Lambda^{(1)}(R_1)\\[1ex]
   &\qquad
          \,+\, 2\rho_\Lambda^{(2)}(R_3,R_4)\rho_\Lambda^{(1)}(R_1)
                \rho_\Lambda^{(1)}(R_2)\,.
\end{aligned}
\ee 
This last representation may not be the simplest one, but it is the one that
we will exploit below.

To introduce a second set of graphs let $\I,\J\subset\N$
be two disjoint finite sets of vertex labels with cardinalities $|\I|\geq 1$
and $|\J|\geq 0$. We define $\gZZ_{\I,\J}$ as the set of graphs with 
vertices given by $\I\cup\J$, out of which those in $\I$ are 
``highlighted'' -- being white as opposed to black, say -- 
and where each black vertex has a connecting path of bonds
to one of the white vertices. These graphs occur in the expansion 
\be{rhom-graph}
   \rho^{(m)}_\Lambda(\RR_m)
   \,=\, \sum_{N=m}^\infty \frac{z^N}{(N-m)!}\,
   \sum_{\gZ_{N,m}} \int_{\Lambda^{N-m}} 
               \Bigl(\!\!\!\!\prod_{(i,j)\in\gZ_{N,m}}\!\!\!\!\!\!f_{ij}\Bigr) 
               \dRR_{m,N}
\ee
of the molecular distribution functions, where $\gZ_{N,m}$ varies over all
graphs in $\gZZ_{\I_m,\J_{m,N}}$ with $\I_m=\{1,\dots,m\}$ and 
$\J_{m,N}=\{m+1,\dots,N\}$. 

A special case of the latter graphs are \emph{trees} and \emph{forests}.
A tree is a connected graph with a single white vertex,
its \emph{root}, such that between each pair of vertices there
is one and only one connecting path. A union of trees 
is a forest; the set of forests whose constituent trees have the same roots
$\I$ and the same black vertices $\J$ is denoted by 
$\gFF_{\I,\J}\subset\gZZ_{\I,\J}$.

Concerning trees we quote the following well-known result
(for a proof, cf., e.g., Penrose~\cite{Penr67}), which will be referred to
later on:

\begin{lemma}
\label{Lem:Penrose}
Let $u$ be a stable and regular pair potential and denote by $\gTT_N$ the
set of trees with $N$ vertices labeled $\#1$ through $\#N$.
If $z$ satisfies \req{z} then the series
\be{tau1}
   \tau_\Lambda^{(1)}(R_1)
   \,=\, \sum_{N=1}^\infty \frac{(ze^{2\beta B})^N}{(N-1)!}
            \sum_{\gT_N\in\gTT_N} \int_{\Lambda^{N-1}} 
               \Bigl(\!\!\prod_{(i,j)\in \gT_N}\!\!|f_{ij}|\Bigr)\dRR_{1,N}
\ee
converges uniformly in $\Lambda$, and there holds
\be{w}
   \norm{\tau_\Lambda^{(1)}}_\Lambda
   \,\leq\, w \,:=\, -\frac{1}{\cbeta}\,W(-z\cbeta e^{2\beta B})
   \,<\,\frac{1}{\cbeta}\,,
\ee
where $W$ is the Lambert $W$-function, cf., e.g., \cite{Lambert}.
Moreover, 
\be{tau2}
   \tau_\Lambda^{(2)}(R_1,R_2)
   \,=\, \sum_{N=2}^\infty \frac{(ze^{2\beta B})^N}{(N-2)!}
            \sum_{\gT_N\in\gTT_N} \int_{\Lambda^{N-2}} 
               \Bigl(\!\!\prod_{(i,j)\in \gT_N}\!\!|f_{ij}|\Bigr)\dRR_{2,N}
\ee
converges uniformly in $\Lambda^2$, and there holds
\bdm
   \tau_\Lambda^{(2)}(R_1,R_2) \,\leq\, G(R_1-R_2)\,, \qquad 
   R_1,R_2\in\Lambda\,,
\edm
where $G:\R^3\to\R$ is given by
\be{convint}
   G(R) \,=\, \int_{\R^3} 
                 e^{2\pi\rmi \xi\cdot R} \, 
                 \frac{w^2\,\fbhat(\xi)}{1-w\,\fbhat(\xi)}
              \dxi
\ee
and $\fbhat$ is the Fourier transform 
\bdm
   \fbhat(\xi) \,=\, 
   \int_{\R^3} e^{-2\pi\rmi\xi\cdot R}\,\bigl| e^{-\beta u(|R|)}-1 \bigr|\dR
\edm
of the absolute values of the Mayer $f$ function~\req{Mayer}.
\end{lemma}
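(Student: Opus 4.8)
The plan is to reconstruct the two classical tree-graph estimates underlying this lemma, following Penrose~\cite{Penr67} and Ruelle~\cite{Ruel69}. Write $\zeta=ze^{2\beta B}$, and recall that regularity of $u$ (cf.~\req{cbeta}) says $\int_{\R^3}|f(R)|\rmd R=4\pi\int_0^\infty|f(r)|\,r^2\rmd r\leq\cbeta$, while $f$ is bounded because $u$ is bounded below under Assumption~\ref{Ass:u}. The two ingredients are: (a) a fixed-point (``tree'') equation for $\tau_\Lambda^{(1)}$ that yields the Lambert bound~\req{w}; and (b) a decomposition of $\tau_\Lambda^{(2)}$ into ``chains'' built from copies of $\tau_\Lambda^{(1)}$, which Fourier-transforms into~\req{convint}.

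For (a), I would decompose every $\gT_N\in\gTT_N$ at its root $\#1$: deleting $\#1$ splits $\gT_N$ into subtrees, each rooted at a former neighbour of $\#1$, whose vertex sets partition $\{2,\dots,N\}$. Carrying out the sums in~\req{tau1} over $N$ and over all such configurations, the exponential formula for labelled rooted trees formally gives the fixed-point equation $\tau_\Lambda^{(1)}(R_1)=\zeta\exp\bigl(\int_\Lambda|f(R_1-R)|\,\tau_\Lambda^{(1)}(R)\,\rmd R\bigr)$. To make this rigorous I would pass to the partial sums $S_M$ of~\req{tau1} truncated to $N\leq M$; since a tree on at most $M$ vertices decomposes into subtrees on at most $M-1$ vertices each, the same combinatorics yields the one-sided bound $S_M(R_1)\leq\zeta\exp\bigl(\int_\Lambda|f(R_1-R)|\,S_{M-1}(R)\,\rmd R\bigr)$. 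Next, \req{z} forces $\zeta\cbeta<1/e$, so $-\zeta\cbeta$ lies in the domain of the principal branch of $W$ and $w\cbeta=-W(-\zeta\cbeta)\in(0,1)$; in particular $w<1/\cbeta$, and $w$ is the smallest positive solution of $x=\zeta e^{x\cbeta}$. An induction on $M$ — base $S_1=\zeta\leq w$, step $S_M(R_1)\leq\zeta e^{w\cbeta}=w$ using $\int_\Lambda|f(R_1-R)|\,\rmd R\leq\cbeta$ — then gives $S_M\leq w$ for every $M$. Uniform convergence of~\req{tau1} in $\Lambda$ follows from the crude estimate of the $M$-th term by $\zeta^M M^{M-2}\cbeta^{M-1}/(M-1)!\lesssim(e\zeta\cbeta)^M$, which uses Cayley's count $|\gTT_M|=M^{M-2}$ and the integration of $M-1$ leaves one at a time (each costing at most $\cbeta$), together with $e\zeta\cbeta<1$ by~\req{z}.

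For (b), I would use that a tree containing $\#1$ and $\#2$ has a unique connecting path $\#1=v_0,v_1,\dots,v_\ell=\#2$, and that the remaining bonds form a forest of subtrees rooted at the $v_i$. Re-summing~\req{tau2} over the path length $\ell\geq1$, over the positions $S_1,\dots,S_{\ell-1}$ of the intermediate path vertices, and over the hanging subtrees — whose generating function at $v_i$, once the activity factor of $v_i$ itself is included, is exactly $\tau_\Lambda^{(1)}$ by (a) — produces
\bdmal
   \tau_\Lambda^{(2)}(R_1,R_2)
   &\,=\, \tau_\Lambda^{(1)}(R_1)\,\tau_\Lambda^{(1)}(R_2)
          \sum_{\ell\geq1}\int_{\Lambda^{\ell-1}}
          \left(\prod_{i=1}^{\ell-1}\tau_\Lambda^{(1)}(S_i)\right)\\[0.6ex]
   &\qquad{}\times\;
          |f(R_1-S_1)|\,|f(S_1-S_2)|\cdots|f(S_{\ell-1}-R_2)|\;\rmd S_1\cdots\rmd S_{\ell-1}\,,
\edmal
the $\ell=1$ term being $\tau_\Lambda^{(1)}(R_1)\tau_\Lambda^{(1)}(R_2)|f(R_1-R_2)|$; uniform convergence in $\Lambda^2$ follows as in (a), now reducing a tree down to the single edge between $\#1$ and $\#2$ and using in addition that $f$ is bounded. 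Bounding each factor $\tau_\Lambda^{(1)}$ in this identity by $w$ turns the $(\ell-1)$-fold integral into an $\ell$-fold convolution of $|f|$, so that $\tau_\Lambda^{(2)}(R_1,R_2)\leq\sum_{\ell\geq1}w^{\ell+1}|f|^{\ast\ell}(R_1-R_2)$. Since $\norm{|f|^{\ast\ell}}_{L^1}\leq\cbeta^\ell$ and $w\cbeta<1$, this series converges in $L^1(\R^3)$, and its Fourier transform is the geometric series $\sum_{\ell\geq1}w^{\ell+1}\fbhat(\xi)^\ell=w^2\fbhat(\xi)/\bigl(1-w\fbhat(\xi)\bigr)$, which is legitimate because $|\fbhat(\xi)|\leq\cbeta$ forces $w|\fbhat(\xi)|<1$. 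Fourier inversion therefore identifies the sum with $G$ from~\req{convint}, and the bound $\tau_\Lambda^{(2)}(R_1,R_2)\leq G(R_1-R_2)$ follows.

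The genuinely delicate point is the combinatorial re-summation in (a) and (b): one must match the $1/(N-m)!$ weights of~\req{tau1}--\req{tau2} with the partition-of-vertex-labels bookkeeping so that the hanging subtrees assemble \emph{exactly} into $\tau_\Lambda^{(1)}$, and one must justify interchanging the infinite sums with the integrals — which is harmless here, since every integrand is nonnegative and Tonelli's theorem applies. Everything else is routine; a careful treatment of the combinatorics can be found in Penrose~\cite{Penr67}.
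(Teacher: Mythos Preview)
The paper does not actually prove Lemma~\ref{Lem:Penrose}; it is stated as a ``well-known result'' with a reference to Penrose~\cite{Penr67} for the proof. Your sketch is precisely the classical Penrose argument the paper is deferring to: the root-decomposition/fixed-point equation for $\tau_\Lambda^{(1)}$ leading to the Lambert-$W$ bound, and the path decomposition of trees between $\#1$ and $\#2$ that expresses $\tau_\Lambda^{(2)}$ as a convolution series whose Fourier transform is the geometric series in~\req{convint}. The outline is sound, including the monotone induction $S_M\leq w$ via $w=\zeta e^{w\cbeta}$ and the Cayley/Stirling tail estimate for uniform convergence.

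Two small remarks. First, your justification that $f$ is bounded should invoke stability rather than Assumption~\ref{Ass:u}: the $N=2$ instance of~\req{B} gives $u\geq-2B$ pointwise, hence $|f|\leq e^{2\beta B}$, which is what the paper itself uses (see the proof of Proposition~\ref{Prop:G}). Second, the integral~\req{convint} need not converge absolutely as written, since $\fbhat$ is only known to be in $L^2$; what your convolution argument really shows is that $\sum_{\ell\geq1}w^{\ell+1}|f|^{\ast\ell}$ is the $L^1$ function whose Fourier transform equals $w^2\fbhat/(1-w\fbhat)$, and the paper's Proposition~\ref{Prop:G} then clarifies in what sense~\req{convint} defines a bounded function (splitting off the $w^2\fbhat$ term). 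Neither point affects the correctness of your argument.
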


For later convenience we list a few properties of the function $G$.

\begin{proposition}
\label{Prop:G}
Let the assumptions of Lemma~\ref{Lem:Penrose} be satisfied.
Then the function $G$ of \req{convint} is nonnegative, even, bounded, 
and integrable. Moreover, if $u(r)\to 0$ as $r\to\infty$
(e.,g., when $u$ satisfies Assumption~\ref{Ass:u})
then $G(R)\to 0$ for $|R|\to\infty$.
\end{proposition}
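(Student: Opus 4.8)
The plan is to establish each asserted property of $G$ directly from the Fourier integral representation \req{convint}, using the fact that $\fbhat$ is itself the Fourier transform of a nonnegative, integrable, radial function (the absolute value of the Mayer $f$-function, which is integrable by the regularity bound \req{cbeta}). First I would record the basic facts about $\fbhat$: since $|e^{-\beta u(|R|)}-1|$ is nonnegative and lies in $L^1(\R^3)$, its transform $\fbhat$ is bounded, continuous, even (indeed radial), and satisfies $0\le\fbhat(\xi)\le\fbhat(0)=\cbeta/(4\pi)\cdot 4\pi=\cbeta$; more precisely $\|\fbhat\|_\infty=\fbhat(0)\le\cbeta$. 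Combined with the bound $w<1/\cbeta$ from \req{w}, this gives $w\,\fbhat(\xi)\le w\cbeta<1$ uniformly in $\xi$, so the denominator $1-w\fbhat(\xi)$ is bounded away from zero and the integrand of \req{convint} is well defined.

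\medskip

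The nonnegativity and evenness of $G$ I would obtain by recognizing the integrand as (a constant times) the Fourier transform of a convolution power series. Write $h(R)=|e^{-\beta u(|R|)}-1|\ge 0$, so $\fbhat=\widehat h$, and expand
\bdm
   \frac{w^2\,\widehat h(\xi)}{1-w\,\widehat h(\xi)}
   \,=\, \sum_{k=2}^\infty w^k\,\widehat h(\xi)^k\,,
\edm
the series converging absolutely and uniformly in $\xi$ because $w\|\widehat h\|_\infty<1$. Since $\widehat h(\xi)^k$ is the Fourier transform of the $k$-fold convolution $h^{*k}$, which is nonnegative (convolution of nonnegative functions) and even, termwise inverse Fourier transformation — justified by the uniform convergence and the fact that each $h^{*k}\in L^1$ with $\|h^{*k}\|_1=\|h\|_1^k$, so $\sum_k w^k\|h^{*k}\|_1<\infty$ — yields
\bdm
   G(R) \,=\, \sum_{k=2}^\infty w^k\, h^{*k}(R)\,,
\edm
which is a sum of nonnegative even functions, hence nonnegative and even, and moreover shows $G\in L^1(\R^3)$ with $\|G\|_1\le\sum_{k\ge2}w^k\|h\|_1^k<\infty$. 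Boundedness of $G$ follows either from the integral \req{convint} directly, since the integrand is dominated by $w^2\|\widehat h\|_\infty/(1-w\|\widehat h\|_\infty)$ times $|\widehat h(\xi)|/\|\widehat h\|_\infty$ — but that is not integrable — so better to argue $h\in L^1\cap L^\infty$ (the latter because $|e^{-\beta u}-1|\le 1$ on $r\ge s$ by Assumption~\ref{Ass:u}, though near the origin $u$ may blow up making $h\le 1$ there too since $e^{-\beta u}\in[0,1)$ when $u\ge 0$; in general $h$ is bounded because $u$ is bounded below by stability-type arguments, or simply $h\le e^{\beta B'}+1$), so $h^{*2}\in L^\infty$ and inductively all $h^{*k}$ for $k\ge2$ are bounded by $\|h^{*2}\|_\infty\|h\|_1^{k-2}$, giving $\|G\|_\infty\le\sum_{k\ge2}w^k\|h^{*2}\|_\infty\|h\|_1^{k-2}<\infty$.

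\medskip

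For the decay $G(R)\to0$ as $|R|\to\infty$ under the extra hypothesis $u(r)\to0$, the cleanest route is the Riemann--Lebesgue lemma applied to the representation \req{convint}: it suffices to show the integrand $\xi\mapsto w^2\widehat h(\xi)/(1-w\widehat h(\xi))$ belongs to $L^1(\R^3)$, for then $G$ is the inverse Fourier transform of an $L^1$ function and hence a continuous function vanishing at infinity. Since $|1-w\widehat h(\xi)|\ge 1-w\cbeta>0$, integrability of the integrand reduces to integrability of $\widehat h$ itself, i.e.\ $\widehat h\in L^1(\R^3)$. This is where the hypothesis $u(r)\to0$ enters: it forces $h(R)=|e^{-\beta u(|R|)}-1|\to0$ and, together with Assumption~\ref{Ass:u} controlling the tail, makes $h$ continuous and in $L^1$, but one needs more than $h\in L^1$ to conclude $\widehat h\in L^1$. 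I would instead use the convolution series: $\widehat h^{*2}=\widehat h\,^2\in L^1$ provided $\widehat h\in L^2$, and $\widehat h\in L^2$ iff $h\in L^2$ by Plancherel; $h\in L^2(\R^3)$ holds because $h$ is bounded and in $L^1$. Then from $G=\sum_{k\ge2}w^kh^{*k}$ with the tail $\sum_{k\ge3}w^kh^{*k}$ converging in $L^\infty$ to a continuous function vanishing at infinity (each $h^{*k}$, $k\ge3$, being a convolution of the continuous bounded $h^{*2}$ with $L^1$ functions, is continuous and vanishes at infinity since $h\to0$), and the leading term $w^2h^{*2}=w^2(h*h)$ being the convolution of two $L^1\cap C_0$ functions, hence in $C_0(\R^3)$, one concludes $G\in C_0(\R^3)$. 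I expect the main obstacle to be precisely this decay claim: verifying that $h$ (equivalently the Mayer $f$-function modulus) is genuinely continuous and vanishing at infinity — which requires extracting from Assumption~\ref{Ass:u} and $u(r)\to0$ that $u$ is, say, locally bounded away from the (possible) hard core and that the singular part of $h$ near $r=0$ is still $L^1$ and bounded — and then propagating $C_0$ through the convolution series uniformly; the nonnegativity, evenness, boundedness, and integrability are comparatively routine once the series representation $G=\sum_{k\ge2}w^kh^{*k}$ is in hand.
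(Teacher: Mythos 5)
There is a genuine error in your series expansion, and it sits exactly at the point where the extra hypothesis $u(r)\to 0$ is supposed to enter. You write
\bdm
   \frac{w^2\,\widehat h(\xi)}{1-w\,\widehat h(\xi)}
   \,=\, \sum_{k=2}^\infty w^k\,\widehat h(\xi)^k\,,
\edm
but the right-hand side equals $w^2\widehat h^{\,2}/(1-w\widehat h)$, not $w^2\widehat h/(1-w\widehat h)$. The correct expansion is $\sum_{k=1}^\infty w^{k+1}\widehat h(\xi)^k$, whose inverse transform is $G=w^2h+\sum_{k\geq 2}w^{k+1}h^{*k}$: the leading term is the \emph{bare} function $w^2h$, not $w^2h^{*2}$. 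This matters. With your (incorrect) series every summand is a convolution of at least two $L^1\cap L^2$ functions, hence lies in $C_0(\R^3)$, and you would conclude $G\in C_0$ with no hypothesis on $u$ at infinity --- a false strengthening of the proposition, whose whole point is that the decay of $G$ requires $u(r)\to 0$ so that $h(R)=|e^{-\beta u(|R|)}-1|\to 0$. With the corrected series the argument survives and in fact becomes the paper's argument: the tail $G-w^2h$ is in $C_0$ unconditionally (the paper gets this by writing $\Ghat=w^2\fbhat+w^3\fbhat^2/(1-w\fbhat)$, noting $h\in L^1\cap L^\infty\subset L^2$, so $\fbhat^2\in L^1$ and Riemann--Lebesgue applies), while boundedness of $G$ follows from boundedness of $h$ (via stability, $u\geq -2B$) and the decay of $G$ follows from $h(R)\to 0$, which is exactly where $u(r)\to 0$ is used. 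A second, harmless, slip: $\widehat h(\xi)\geq 0$ does not follow from $h\geq 0$; you only need $|\widehat h(\xi)|\leq\widehat h(0)\leq\cbeta$ together with $w<1/\cbeta$ to keep the denominator away from zero.

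Once the index is fixed, your route differs from the paper's only in the proof of nonnegativity and integrability: you get $G\geq 0$ and $\|G\|_{L^1}\leq w^2\|h\|_1/(1-w\|h\|_1)$ from the convolution-power series, whereas the paper deduces $G\geq 0$ from the majorization $G\geq\tau_\Lambda^{(2)}\geq 0$ of Lemma~\ref{Lem:Penrose} and then reads off $\|G\|_{L^1}=\Ghat(0)$. Your version is self-contained (it does not lean on the tree-graph estimate) and is a perfectly acceptable alternative for those two properties; for boundedness and decay the two arguments coincide after the correction above.
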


\begin{proof}
According fo Lemma~\ref{Lem:Penrose} $G$ majorizes the nonnegative function
$\tau^{(2)}$, hence $G$ is nonnegative and even by virtue of \req{convint}.

Since $|f|$ is bounded by $e^{2\beta B}$ and belongs to $L^1(\R)$,
cf.~\req{B} and \req{cbeta}, respectively, it follows that $f\in L^2(\R^3)$,
and then $\fbhat\in L^2(\R^3)$, too. 
Therefore, rewriting \req{convint} as
\bdm
   \Ghat \,=\, w^2\,\fbhat \,+\, \frac{w^3}{1-w\,\fbhat}\,\fbhat^2
\edm
we conclude that the second term on the right-hand side belongs to 
$L^1(\R^3)$ because its numerator is bounded away from zero according to
\req{w}, and taking the inverse Fourier transform we obtain
\bdm
   G - w^2 |f| \in C_0\,.
\edm
From this it follows that $G$ is bounded, and if $u$ vanishes at infinity 
then so does $G$.

Finally, it follows from \req{convint} and the nonnegativity of $G$ that
\bdm
   \int_{\R^3} \bigl|G(R)\bigr| \!\dR \,=\, \Ghat(0)
   \,=\, \frac{w^2\,\fbhat(0)}{1-w\,\fbhat(0)}
   \,<\, \infty\,,
\edm
i.e., $G\in L^1(\R^3)$.
\end{proof}

Let $\I,\J\subset\N$ with $\I\cap\J=\emptyset$ be given; 
furthermore, denote by $\RR_\I$ and $\RR_\J$ the coordinates of the particles 
with labels in $\I$ and $\J$, respectively.
Ruelle~\cite{Ruel64,Ruel69} considered the functions
\be{varphi-Ruelle}
   \varphi_{\I,\J}(\RR_\I;\RR_\J)
   \,=\, \sum_{\gZ_{\I,\J}}\,
         \Bigl(\!\!\!\!\prod_{(i,j)\in\gZ_{\I,\J}}\!\!\!\!\!\!f_{ij}\Bigr)\,,
\ee
where $\gZ_{\I,\J}$ varies over all graphs in $\gZZ_{\I,\J}$; for
$\I=\emptyset$ the definition \req{varphi-Ruelle} is set to be zero
because $\gZZ_{\emptyset,\J}$ is the empty set. 
Take note that the order of the particles in $\I$ and $\J$ does not affect 
the value of the right-hand side of \req{varphi-Ruelle}.

Given $i^*=i^*(\I,\RR_\I)\in\I$, 
and eliminating vertex $\#i^*$ from all graphs in $\gZ_{\I,\J}$, 
Ruelle derived the recursion
\bdm
   \varphi_{\I,\J}(\RR_\I;\RR_\J)
   \,=\, d_{\I,i^*}(\RR_\I)
         \sum_{\K\subset\J}
            k_\K(R_{i^*};\RR_\K)
            \varphi_{\I\cup\K\setminus\{i^*\},\J\setminus\K}
               (\RR_{\I\cup\K\setminus\{i^*\}};\RR_{\J\setminus\K})\,,
\edm
where 
\begin{align}
\label{eq:dm}
   d_{\I,i^*}(\RR_\I)
   &\,=\,\prod_{i\in\I\atop \,\,\,i\neq i^*}\!e^{-\beta u(|R_i-R_{i^*}|)}\\[1ex]
\intertext{and}
\label{eq:kernel}
   k_\K(R;\RR_\K)
   &\,=\, \prod_{j\in\K} f(R_j-R)\,.
\end{align}
We emphasize that the particular elements of the family of $\varphi$-functions
that enter into this recursion depend on the actual values of the input 
coordinates $\RR_\I$ because we explicitly allow $i^*$ to depend on $\RR_\I$;
aside of that the listing of the variables $\RR_\I$ and $\RR_\J$ is
redundant here and below, because it's always the coordinates of the particles
associated with the two indices of $\varphi$ that are
used as corresponding arguments. We therefore simplify our notation and
follow~\cite{PoUe09} by writing $\varphi(\I;\J)$ instead of 
$\varphi_{\I,\J}(\RR_\I;\RR_\J)$ in the remainder of this work; 
similarly we will write $d_{\I,i^*}$ and $k_\K(R_{i^*})$ for the left-hand sides
of \req{dm} and \req{kernel}, respectively. The above recursion
thus takes the form 
\be{Ruelle-recursion}
   \varphi(\I;\J)
   \,=\, d_{\I,i^*} \sum_{\K\subset\J}
            k_\K(R_{i^*})\varphi(\I\cup\K\setminus\{i^*\};\J\setminus\K)\,.
\ee

If we select $i^*=j^*(\I,\RR_\I)\in\I$ in accordance with \req{jstar} 
in such a way that
\be{jstar2}
   \sum_{i\in\I\atop \,\,\,i\neq j^*} \utilde(|R_i-R_{j^*}|) \,\geq\, -2B
\ee
for every $\utilde=u+v$ with $\norm{v}_\V\leq t_0$
then it follows from \req{Ruelle-recursion} that
\be{Ueltschi-recursion-tmp}
   \bigl|\varphi(\I;\J)\bigr|
   \,\leq\, e^{2\beta B}\! \sum_{\K\subset\J} \bigl|k_\K(R_{j^*})\bigr|
               \bigl|\varphi(\I\cup\K\setminus\{j^*\};\J\setminus\K)\bigr|\,,
\ee
and by induction Ruelle concluded that
\be{Ruelle-induction}
   \int_{\Lambda^{|\J|}} \bigl| \varphi(\I;\J)\bigr|
   \dRR_\J
   \,\leq\, (|\J|)!\,\cbeta^{|\J|}
            \bigl(e^{2\beta B+1}\bigr)^{|\I\cup\J|-1}\,.
\ee
Later, Poghosyan and Ueltschi~\cite{PoUe09} considered the functions
\be{Ueltschi}
   \psi(\I;\J)
   \,=\, e^{2(N-1)\beta B}\!\!\sum_{\gF\in\gFF_{\I,\J}}\,
         \Bigl(\!\!\!\!\prod_{(i,j)\in \gF}\!\!\!\!|f_{ij}|\Bigr)\,,
   \qquad N=|\I\cup\J|\,,
\ee
which satisfy the recursion \req{Ueltschi-recursion-tmp} with \emph{equality}, 
i.e.,
\be{Ueltschi-recursion}
   \psi(\I;\J)
   \,=\, e^{2\beta B}\! \sum_{\K\subset\J} \bigl|k_\K(R_{j^*})\bigr|
         \psi(\I\cup\K\setminus\{j^*\};\J\setminus\K)\,.
\ee
From this they readily obtained a so-called 
\emph{tree-graph inequality}\footnote{In fact,
Poghosyan and Ueltschi's normalization of the functions $\psi$ is slightly 
different and the bounds~\req{varphi-psi} in \cite{PoUe09} are off by 
$e^{2\beta B}$ for $|\I|\geq 2$ and $e^{4\beta B}$ for $|\I|=1$; 
this can be fixed by a more careful initialization of their 
inductive argument.},
namely 
\begin{subequations}
\label{eq:varphi-psi}
\begin{align}
\label{eq:varphi-psi-a}
   \bigl|\varphi(\I;\J)\bigr| 
   &\,\leq\, \psi(\I;\J)\,,\phantom{e^{-2\beta B}xxx}
   |\I| \geq 2\,, 
\intertext{and}
\label{eq:varphi-psi-b}
   \bigl|\varphi(\I;\J)\bigr| 
   &\,\leq\, e^{-2\beta B}\psi(\I;\J)\,,\phantom{xxx}
   |\I| = 1\,.
\end{align}
\end{subequations}

The set $\gCC_N$ of connected graphs
with vertices labeled $\#1$ to $\#N$ and the set $\gTT_N$ of trees with the 
same vertices agree -- up to the color of their vertices --
with the two sets $\gZZ_{\{1\},\J_{1,N}}$ and $\gFF_{\{1\},\J_{1,N}}$, respectively,
with $J_{1,N}=\{2,\dots,N\}$. 
It therefore follows from \req{varphi-Ruelle}, \req{varphi-psi-b}, 
and \req{Ueltschi} that
\bdmal
    \Biggl|\,
       \sum_{\gC_N\in\gCC_N}
       \int_{\Lambda^{N-2}}
          \Bigl(\!\!\!\!\prod_{(i,j)\in\gC_N}\!\!\!\!f_{ij}\Bigr) \!
       \dRR_{2,N}
    \Biggr|
   &\,=\, \Biggl| 
             \int_{\Lambda^{N-2}} \varphi(\{1\};\{2,\dots,N\})\!\dRR_{2,N}
          \Biggr| \\[1ex]
   &
    \,\leq\, e^{2(N-2)\beta B} 
             \sum_{\gT_N\in\gTT_N}
             \int_{\Lambda^{N-2}} 
         \Bigl(\!\!\!\!\prod_{(i,j)\in \gT_N}\!\!\!\!\!|f_{ij}|\Bigr)\!
         \dRR_{2,N} \,,
\edmal
and hence, cf.~\req{qm-graph} and Lemma~\ref{Lem:Penrose},
\be{Penrose}
\begin{aligned}
   \omega_\Lambda^{(2)}(R_1,R_2)
   &\,\leq\, e^{-4\beta B}
             \sum_{N=2}^\infty \frac{(ze^{2\beta B})^N}{(N-2)!}\,
             \sum_{\gT_N\in\gTT_N}
             \int_{\Lambda^{N-2}} 
               \Bigl(\!\!\!\!\prod_{(i,j)\in \gT_N}\!\!\!\!\!|f_{ij}|\Bigr)\!
             \dRR_{2,N}\\[1ex]
   &\,\leq\, e^{-4\beta B} G(R_1-R_2)\,.
\end{aligned}
\ee
Since the right-hand side of \req{Penrose} does not depend on $\Lambda$ 
we can turn to the thermodynamical limit to conclude that
\bdm
   \omega^{(2)}(R,0) \,\leq\, e^{-4\beta B}G(R)\,.
\edm
Accordingly, it follows from Proposition~\ref{Prop:G} that 
$\omega^{(2)}(\,\cdot\,,0)\in L^1(\R^3)$, and that $\omega^{(2)}(R,0)\to 0$ as 
$|R|\to\infty$ if $u$ satisfies Assumption~\ref{Ass:u}.

\section{Higher order correlation functions}
\label{Sec:corrfcts}
In this section we provide similar estimates for the higher order
correlation functions
\begin{align}
\label{eq:chi3}
   \chi_\Lambda^{(3)}(R_1,R_2,R_3)
   &\,=\, \rho_\Lambda^{(3)}(R_1,R_2,R_3)
          \,-\, \rho_\Lambda^{(1)}(R_1)\rho_\Lambda^{(2)}(R_2,R_3)
\intertext{and}
\label{eq:chi4}
   \chi_\Lambda^{(4)}(R_1,R_2,R_3,R_4)
   &\,=\, \rho_\Lambda^{(4)}(R_1,R_2,R_3,R_4)
          \,-\, \rho_\Lambda^{(2)}(R_1,R_2)\rho_\Lambda^{(2)}(R_3,R_4)\,,
\end{align}
which we will need in Section~\ref{Sec:operator}.
To this end we introduce another set of graphs:
For $N\geq 2$ let $\gZZNx$ be the set of graphs in $\gZZ_{\I_2,\J_{2,N}}$,
where $\I_2=\{1,2\}$ and $\J_{2,N}=\{3,\dots,N\}$,
that have no connecting edge between vertices $\#1$ and $\#2$. Moreover, let
\be{varphip}
   \varphi_N^\times(\RR_N) \,=\, \!\!\sum_{\gZNx\in\gZZNx}\,
      \Bigl(\!\!\!\prod_{(i,j)\in\gZNx}\!\!\!\!f_{ij}\Bigr)
\ee
and
\be{zetam}
   \zeta^{(m)}(\RR_m)
   \,=\, \sum_{N=m}^\infty \frac{z^N}{(N-m)!}
         \int_{\Lambda^{N-m}}
            \varphi_N^\times(\RR_N)
         \dRR_{m,N}\,.
\ee

\begin{lemma}
\label{Lem:zetam}
Let $u$ be stable and regular and $z$ satisfy \req{z}. Then there holds
\bdm
   \bigl|\zeta^{(3)}(R_1,R_2,R_3)\bigr|
   \,\leq\, 
   we^{-4\beta B} \bigl(G(R_1-R_3)+G(R_2-R_3)\bigr)\,,
\edm
where the constant $w$ and the function $G$ are defined in 
Lemma~\ref{Lem:Penrose}.
\end{lemma}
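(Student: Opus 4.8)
The idea is to mimic the derivation of the bound $\omega_\Lambda^{(2)}(R_1,R_2)\leq e^{-4\beta B}G(R_1-R_2)$ from Section~\ref{Sec:Cluster}, but now starting from the graph class $\gZZ_N^\times$ rather than the connected graphs $\gCC_N$. First I would relate $\varphi_N^\times$ to the Ruelle functions: a graph in $\gZZ_N^\times$ is a graph on $\{1,\dots,N\}$ with white vertices $\{1,2\}$, every black vertex joined by a path to a white vertex, and no bond $(1,2)$. Splitting such a graph according to which black vertices are connected (through black vertices only) to vertex $\#1$ versus to vertex $\#2$, and noting that the absence of the $(1,2)$ bond means the two ``sides'' share no vertices, one sees that summing $\prod f_{ij}$ over $\gZZ_N^\times$ factorizes as a sum over ordered partitions $\{3,\dots,N\}=\K_1\cup\K_2$ of $\varphi(\{1\};\K_1)\,\varphi(\{2\};\K_2)$. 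Hence $\zeta^{(3)}(R_1,R_2,R_3)$ decomposes into two groups of terms: those in which vertex $\#3$ sits on the ``$1$-side'' and those in which it sits on the ``$2$-side.''

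Next I would bound each group. Consider the $1$-side group: after carrying out the integration over the black coordinates $\RR_{3,N}$ and using the tree-graph inequality \req{varphi-psi-b} for the single-root functions $\varphi(\{1\};\cdot)$ and $\varphi(\{2\};\cdot)$, the $\varphi(\{2\};\K_2)$ factor integrates to a numerical constant controlled by Ruelle's bound \req{Ruelle-induction} (or, better, directly summed via the $\psi$ recursion to produce a factor of $\tau_\Lambda^{(1)}$, contributing $\norm{\tau_\Lambda^{(1)}}_\Lambda\leq w$), while the $\varphi(\{1\};\K_1)$ factor — with $R_3$ among its arguments and the other black coordinates integrated out — is dominated exactly as in \req{Penrose} by $e^{-4\beta B}$ times the majorant appearing in Lemma~\ref{Lem:Penrose}, i.e.\ by $e^{-4\beta B}\tau_\Lambda^{(2)}(R_1,R_3)\leq e^{-4\beta B}G(R_1-R_3)$. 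Keeping track of the combinatorial factors $z^N/(N-m)!$ and the powers $e^{2\beta B}$ exactly as in the passage leading to \req{Penrose}, the $1$-side group is bounded by $we^{-4\beta B}G(R_1-R_3)$. By the symmetric argument with the roles of $\#1$ and $\#2$ interchanged, the $2$-side group is bounded by $we^{-4\beta B}G(R_2-R_3)$. Adding the two (and taking absolute values, using $|\varphi|\leq e^{-2\beta B}\psi$ throughout) gives the claimed inequality.

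The main obstacle I anticipate is the bookkeeping in the factorization step: one must verify that the sum over $\gZZ_N^\times$ really does split cleanly into the product of two independent single-root Ruelle sums indexed by a partition of the black vertices, with the correct multinomial coefficients so that the series $\sum_N z^N/(N-3)!\,(\cdots)$ reorganizes into (a product of) the Penrose-type series of Lemma~\ref{Lem:Penrose}. The delicate point is that in $\zeta^{(3)}$ the distinguished vertex $\#3$ is white-and-integrated-free, so exactly one of $\K_1,\K_2$ contains (the remaining copies of) the integrated black labels while $\#3$ is appended to one side; making sure the activity powers $z^N$ and the factorials line up — so that one side yields a $G$ and the other yields the scalar $w$ rather than an unsummable constant — is where care is needed. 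Once the factorization and counting are pinned down, the two single-root estimates are immediate consequences of Lemma~\ref{Lem:Penrose}, the tree-graph inequality \req{varphi-psi}, and \req{w}, and the proof closes by passing, if desired, to the thermodynamical limit (though the statement as given is already $\Lambda$-uniform through the $\Lambda$-independence of $w$ and $G$).
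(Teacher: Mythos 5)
Your factorization step is where the argument breaks. You claim that, because the bond $(1,2)$ is absent, every graph in $\gZZNx$ splits into two vertex-disjoint pieces hanging off $\#1$ and $\#2$ respectively, so that the sum over $\gZZNx$ factorizes into $\sum\varphi(\{1\};\K_1)\,\varphi(\{2\};\K_2)$ over disjoint partitions $\K_1\cup\K_2=\J_{2,N}$. This is false: $\gZZNx$ excludes only the \emph{direct} bond between the two white vertices, not paths between them through black vertices. Already for $N=3$ the graph with bonds $(1,3)$ and $(2,3)$ belongs to $\gZZ_{3}^{\times}$, so
\begin{displaymath}
   \varphi_3^\times \,=\, f_{13}+f_{23}+f_{13}f_{23}\,,
\end{displaymath}
whereas your partition sum produces only $f_{13}+f_{23}$. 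In general every graph in which $\#1$ and $\#2$ lie in the same connected component without a direct bond is lost, and these are precisely the contributions carrying the connected part $\omega_\Lambda^{(3)}$ of $\chi_\Lambda^{(3)}$ — the nontrivial content of the lemma.

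The disjoint-trees picture you are after is correct, but only at the level of the majorant, not of $\varphi_N^\times$ itself. The paper's route is: eliminate vertex $\#1$ via Ruelle's recursion \req{Ruelle-recursion} (the factor $d_{\I_2,1}$ is absent exactly because the $(1,2)$ bond is missing), giving $\varphi_N^\times=\sum_{\K\subset\J_{2,N}}k_\K(R_1)\,\varphi(\K\cup\{2\};\J_{2,N}\setminus\K)$; then apply the tree-graph inequality \req{varphi-psi-a} together with the $\psi$-recursion \req{Ueltschi-recursion} to obtain $|\varphi_N^\times|\leq e^{-2\beta B}\psi(\I_2;\J_{2,N})$. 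Since $\psi(\I_2;\J_{2,N})$ is a sum over \emph{forests} with roots $\#1$ and $\#2$, whose constituent trees genuinely are vertex-disjoint, the split according to which tree contains vertex $\#3$ is now legitimate, and each resulting series is bounded by $w\,G(R_l-R_3)$ by the graph-integral calculus of Lemma~\ref{Lem:Penrose}, exactly as in your second paragraph. So your estimation of the two groups is essentially the right computation; it must simply be performed after passing to $\psi$, not before.
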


\begin{proof}
For a given value of $N\geq 2$ we adopt Ruelle's method mentioned in the 
previous section and eliminate vertex $\#1$ from each of the graphs in 
$\gZZNx$ to obtain the identity
\bdm
   \varphi_N^\times(\RR_N)
   \,=\, \!\!\sum_{\K\subset\J_{2,N}}\!\!
            k_\K(R_1) \varphi(\K\cup\{2\};\J_{2,N}\setminus\K)\,.
\edm
Note that the corresponding term $d_{\I_2,1}$ of \req{Ruelle-recursion} 
is missing here because all graphs in $\gZZNx$ lack a connecting edge 
between vertices $\#1$ and $\#2$. 
Therefore, \req{varphi-psi-a} and \req{Ueltschi-recursion} yield
\be{Lem:varphip}
   \bigl|\varphi_N^\times(\RR_N)\bigr|
   \,\leq\, \!\!\sum_{\K\subset\J_{2,N}}\!\!
            \bigl|k_\K(R_1)\bigr|
            \psi(\K\cup\{2\};\J_{2,N}\setminus\K)
   \,=\, e^{-2\beta B} \psi(\I_2;\J_{2,N})\,,
\ee
and inserting this inequality into \req{zetam} we arrive at
\bdm
   \bigl|\zeta^{(3)}(R_1,R_2,R_3)\bigr|
   \,\leq\, e^{-2\beta B}
            \sum_{N=3}^\infty \frac{z^N}{(N-3)!} 
            \int_{\Lambda^{N-3}} \psi(\I_2;\J_{2,N}) \dRR_{3,N}\,.
\edm
From the definition~\req{Ueltschi} of $\psi(\I_2;\J_{2,N})$ 
we therefore conclude that
\be{zetam-estimate}
   \bigl|\zeta^{(3)}(R_1,R_2,R_3)\bigr|
   \,\leq\, e^{-4\beta B}(S_1^{(3)}+S_2^{(3)})
\ee
with
\bdm
   S_l^{(3)} 
   \,=\, \sum_{N=3}^\infty \!\frac{(ze^{2\beta B})^N}{(N-3)!}
         \sum_{\gF_l}
         \int_{\Lambda^{N-3}} 
           \Bigl(\!\!\!\!\prod_{(i,j)\in \gF_l}\!\!\!\!|f_{ij}|\Bigr)
         \dRR_{3,N}\,, \qquad l=1,2\,,
\edm
where the inner sum varies over all those forests 
$\gF_l\in\gFF_{\I_2,\J_{2,N}}$, for which the vertices $\#3$ and $\#l$ 
belong to the same tree. 

The forests that occur in $S_1^{(3)}$ consist of all possible combinations of
one tree involving vertices $\#1$ and $\#3$ and another tree rooted in 
vertex~$\#2$;
hence, we can use Lemma~\ref{Lem:Penrose} and classical graph integral
calculus, cf.~\cite{HaMcD13,Stel64} to estimate
\bdm
   S_1^{(3)} \,\leq\, w\,G(R_1-R_3)\,.
\edm
Likewise we obtain
\bdm
   S_2^{(3)} \,\leq\, w\,G(R_2-R_3)\,.
\edm
Inserting this into \req{zetam-estimate} we thus obtain the assertion.
\end{proof}

Next we recall from \req{Ursell3} that
\bdm
   \chi_\Lambda^{(3)}(R_1,R_2,R_3)
   \,=\, \omega_\Lambda^{(3)}(R_1,R_2,R_3) 
         + \omega_\Lambda^{(2)}(R_1,R_3)\rho_\Lambda^{(1)}(R_2)
         + \omega_\Lambda^{(2)}(R_1,R_2)\rho_\Lambda^{(1)}(R_3)\,.
\edm
From this and \req{qm-graph} we conclude that $\chi_\Lambda^{(3)}$ is a sum 
over all graphs with $N\geq 3$ vertices and the associated graph integrals 
over $\RR_{3,N}\in\Lambda^{N-3}$,
where the graphs are of either one of the following three types:
\begin{itemize}
\item[(i)] a connected graph;
\item[(ii)] a graph with two connected components, one of which containing
vertex $\#2$ and the other one containing vertices $\#1$ and $\#3$;
\item[(iii)] a graph with two connected components, one of which containing
vertex $\#3$ and the other one containing vertices $\#1$ and $\#2$.
\end{itemize}
We use this observation to establish the following result.

\begin{proposition}
\label{Prop:chi3}
If $u$ is a stable and regular pair potential and the activity $z$ 
satisfies \req{z} then
\bdm
   \bigl|\chi_\Lambda^{(3)}(R_1,R_2,R_3)\bigr|
   \,\leq\, 
   we^{-4\beta B} e^{-\beta u(|R_2-R_3|)}
   \bigl(G(R_2-R_1) + G(R_3-R_1)\bigr)
\edm
for all $R_1,R_2,R_3\in\Lambda$.
\end{proposition}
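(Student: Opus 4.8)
The plan is to decompose $\chi_\Lambda^{(3)}$ according to the three types of graphs listed just before the statement, estimate the contribution of each type, and combine. The key observation is that the graphs of type (ii) and (iii) are exactly the graphs counted by $\varphi_N^\times$ after one relabels which pair of vertices is not joined: for type (ii) the ``forbidden'' edge is $(1,2)$ and we recover precisely $\zeta^{(3)}(R_1,R_2,R_3)$ from Lemma~\ref{Lem:zetam} (up to the order of arguments), while for type (iii) the forbidden edge is $(1,3)$ and we recover $\zeta^{(3)}(R_1,R_3,R_2)$. So the type (ii) and type (iii) contributions are bounded by $we^{-4\beta B}(G(R_1-R_2)+G(R_3-R_2))$ and $we^{-4\beta B}(G(R_1-R_3)+G(R_2-R_3))$ respectively. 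The type (i) contribution is the connected piece, which up to relabeling is just $\omega_\Lambda^{(3)}$ and I would bound it by $\psi(\I_3;\J_{3,N})$-type sums using the tree-graph inequality \req{varphi-psi-a}, exactly as was done for $\omega_\Lambda^{(2)}$ in \req{Penrose}. However, a cruder bound of that shape gives a sum of \emph{three} $G$-terms, not two, and no extra $e^{-\beta u(|R_2-R_3|)}$ factor, so a naive term-by-term estimate is too weak and the three pieces must be reorganized.

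The real point is to eliminate vertex $\#1$ \emph{first}, using Ruelle's recursion \req{Ruelle-recursion} with $i^*=1$ applied to the combined object $\chi_\Lambda^{(3)}$ rather than to each Ursell function separately. Concretely: in every graph contributing to $\chi_\Lambda^{(3)}$, vertex $\#1$ sits in some connected component; removing $\#1$ leaves a set of black vertices $\K$ (those that were attached to $\#1$) which get promoted, and the remaining graph is a graph on the vertex set $\{2,3\}\cup(\J_{3,N}\setminus\K)$ in which vertices $\#2$ and $\#3$ are roots but \emph{need not be connected to each other}. The factor $d_{\{1,2,3\},1}$ produced by the recursion is $e^{-\beta u(|R_2-R_1|)}e^{-\beta u(|R_3-R_1|)}$, which is absorbed by stability into $e^{2\beta B}$ when combined with $|k_\K(R_1)|$; but crucially, whether or not the edge $(1,2)$ is present (equivalently whether the $d$-factor $e^{-\beta u(|R_2-R_1|)}$ really appears) is exactly what distinguishes the ``$\#2$ and $\#3$ in the same component'' graphs from the ``$\#2$ and $\#3$ separated'' ones, and summing over $\K$ with the help of \req{Ueltschi-recursion} collapses everything to $\psi(\{2,3\};\J_{3,N})$ for the connected-in-$\{2,3\}$ part and to $e^{-\beta u(|R_2-R_3|)}\psi(\{2\},\{3\};\ldots)$-type forests for the rest. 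After this elimination the only roots left are $\#2$ and $\#3$, and a tree rooted at $\#2$ together with a tree rooted at $\#3$ can each be estimated by $w$ and $G(\,\cdot\,-R_2)$ resp. $w$ and $G(\,\cdot\,-R_3)$ via Lemma~\ref{Lem:Penrose} and graph-integral calculus; the Mayer factor $e^{-\beta u(|R_2-R_3|)}$ that survives from the component containing the edge between $\#2$'s tree and $\#3$'s tree is precisely the promised prefactor. Only $G$ arguments of the form $R_j-R_1$ with $j\in\{2,3\}$ ever appear because $R_1$ is the one coordinate that was integrated against $k_\K$ first, pinning it as the ``center'' of the convolution estimate.

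In more detail, the step-by-step plan is: (1) write $\chi_\Lambda^{(3)}$ as $\sum_{N\ge3}\frac{z^N}{(N-3)!}\int_{\Lambda^{N-3}}\Phi_N\rmd\RR_{3,N}$ where $\Phi_N$ is the sum over graphs of types (i)--(iii) of $\prod f_{ij}$, justified by \req{Ursell3} and \req{qm-graph} as in the paragraph preceding the proposition; (2) apply Ruelle's vertex-elimination identity \req{Ruelle-recursion} with $i^*=1$ to $\Phi_N$, keeping track that the $d$-factor equals $e^{-\beta u(|R_2-R_1|)}e^{-\beta u(|R_3-R_1|)}$ when the edge $(1,2)$ is present and only $e^{-\beta u(|R_3-R_1|)}$ (or symmetrically) otherwise; (3) take absolute values and use stability \req{jstar2} for the vertex $j^*$ of the residual graph on $\{2,3\}\cup(\J\setminus\K)$ together with \req{varphi-psi} and the $\psi$-recursion \req{Ueltschi-recursion} to bound $|\Phi_N|$ by $e^{-\beta u(|R_2-R_3|)}$ times a sum of at most two forest sums of the form $\psi(\{2\},\{3\};\cdot)$, one ``anchoring'' $R_1$ via $G(R_2-R_1)$ and one via $G(R_3-R_1)$; (4) resum in $N$ and apply Lemma~\ref{Lem:Penrose} and Proposition~\ref{Prop:G} to each factor, picking up the bound $w$ for one tree-sum and $G(R_j-R_1)$ for the tree-sum containing vertex $\#1$. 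The main obstacle I anticipate is bookkeeping step (3): making the combinatorial reorganization rigorous — correctly matching each of the three graph types to the right residual forest structure after removing $\#1$, and verifying that the Mayer factor $e^{-\beta u(|R_2-R_3|)}$ is never over- or under-counted — is delicate, because in the type (ii)/(iii) graphs there is no edge $(2,3)$ at all, so this factor must come from using stability on $d$ in the \emph{opposite} way (bounding $e^{-\beta u}\le e^{\beta|u|}$ rather than absorbing it), which is exactly where the sign of the inequality and the choice of $i^*$ interact. Everything downstream is the same graph-integral calculus already used for $\omega_\Lambda^{(2)}$ and $\zeta^{(3)}$.
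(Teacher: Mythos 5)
There is a genuine gap: you never identify the mechanism that actually produces the prefactor $e^{-\beta u(|R_2-R_3|)}$, and the mechanism you propose cannot produce it. The paper's proof rests on a single exact identity. Every graph contributing to $\chi_\Lambda^{(3)}$ either lacks the bond $(2,3)$ --- in which case it is precisely one of the graphs counted by $\zeta^{(3)}(R_3,R_2,R_1)$ (every vertex connected by a path to vertex $\#2$ or to vertex $\#3$, no direct $(2,3)$ bond) --- or it is such a graph with the bond $(2,3)$ added. Pairing each graph with its counterpart and using $1+f_{23}=e^{-\beta u(|R_2-R_3|)}$ gives the exact factorization
$\chi_\Lambda^{(3)}(R_1,R_2,R_3)=e^{-\beta u(|R_2-R_3|)}\,\zeta^{(3)}(R_3,R_2,R_1)$,
after which Lemma~\ref{Lem:zetam} (with permuted arguments) finishes the proof. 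Your plan instead eliminates vertex $\#1$ first via \req{Ruelle-recursion} and hopes that the Boltzmann factor emerges from the resulting $d$-factor, or from ``using stability on $d$ in the opposite way.'' But eliminating vertex $\#1$ can only generate factors $e^{-\beta u(|R_1-R_2|)}$ and $e^{-\beta u(|R_1-R_3|)}$; nothing depending on $|R_2-R_3|$ ever appears at that stage, so the claimed prefactor cannot arise there. The $d$-factor $1+f_{12}$ also cannot be split according to whether the edge $(1,2)$ is present, as your step (2) suggests: it already accounts for both alternatives at once.

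Two further problems. First, \req{Ruelle-recursion} is an identity for $\varphi(\I;\J)$, i.e.\ for the sum over \emph{all} of $\gZZ_{\I,\J}$; the sum restricted to graphs of types (i)--(iii) excludes, for instance, graphs in which $\#2$ and $\#3$ share a component not containing $\#1$, as well as three-component graphs, so you would first have to derive and justify a modified recursion for the restricted sum --- exactly the delicate bookkeeping you flag but do not resolve. Second, your opening identification of the type (ii)/(iii) graphs with $\zeta^{(3)}$ (after relabelling the forbidden edge) is not exact either: the graphs of $\gZZNx$ include connected ones (the two roots joined through black vertices), whereas types (ii) and (iii) are genuinely disconnected. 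The correct statement is that the union of all three types, with the $(2,3)$ bond deleted when present, coincides with the graph set of $\zeta^{(3)}(R_3,R_2,R_1)$; that observation, and the resulting $(1+f_{23})$ factorization, is the missing idea.
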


\begin{proof}
By its definition $\zeta^{(3)}(R_3,R_2,R_1)$
-- note the different ordering of the arguments --
is the sum over all graphs with $N\geq 3$ vertices
and the associated graph integrals over $\RR_{3,N}$, where in each graph
every vertex is connected to vertex~$\#2$ or to vertex~$\#3$, 
but the latter two vertices have no connecting edge. 
Adding a bond between vertices $\#2$ and $\#3$ to any of these 
graphs therefore results in a connected graph. Accordingly, any graph 
occuring in the definition of $\zeta^{(3)}(R_3,R_2,R_1)$
belongs to the list (i)-(iii) above,
and so does its counterpart with the additional bond.

Likewise, if $\gC\in\gCC_N$ with $N\geq 3$,
and if one eliminates the edge between vertices $\#2$ and $\#3$ when present,
then, still, every vertex has a connecting path to vertex~$\#2$ or 
to vertex $\#3$. We therefore conclude that the 
graphs appearing in \req{chi3} consist of all those taken care of in
$\zeta^{(3)}(R_3,R_2,R_1)$ and their counterparts with an additional bond
between vertices $\#2$ and $\#3$.
Thus it follows from the definition of the graph integrals and the 
distributive law that
\bdm
   \chi_\Lambda^{(3)}(R_1,R_2,R_3)
   \,=\, e^{-\beta u(|R_2-R_3|)}\zeta^{(3)}(R_3,R_2,R_1)\,.
\edm
Now the assertion follows from Lemma~\ref{Lem:zetam}.
\end{proof}

Finally we turn to $\chi_\Lambda^{(4)}$ of \req{chi4}.
A straightforward computation based on \req{Ursell4} and \req{chi3} 
reveals that
\be{chi4-identity}
\begin{aligned}
   &\chi_\Lambda^{(4)}(R_1,R_2,R_3,R_4)
    \,=\, \eta(R_1,R_2,R_3,R_4)\\[1ex]
   &\qquad \qquad
    \,+\, \chi_\Lambda^{(3)}(R_1,R_3,R_4)\rho_\Lambda^{(1)}(R_2)
    \,+\, \chi_\Lambda^{(3)}(R_2,R_3,R_4)\rho_\Lambda^{(1)}(R_1)\,,
\end{aligned}
\ee
where
\bdmal
   &\eta(R_1,R_2,R_3,R_4)
    \,=\, \omega_\Lambda^{(4)}(R_1,R_2,R_3,R_4)
          \,+\, \omega_\Lambda^{(3)}(R_1,R_2,R_3)\rho_\Lambda^{(1)}(R_4)\\[1ex]
   &\qquad \qquad \quad \
    \,+\, \omega_\Lambda^{(2)}(R_1,R_4)\omega_\Lambda^{(2)}(R_2,R_3) 
    \,+\, \omega_\Lambda^{(2)}(R_1,R_3)\omega_\Lambda^{(2)}(R_2,R_4) \\[1ex]
   &\qquad \qquad \quad \
    \,+\, \omega_\Lambda^{(3)}(R_1,R_2,R_4)\rho_\Lambda^{(1)}(R_3)\,.
\edmal
As above we observe that $\eta$ is the sum of all graph integrals that 
correspond to connected graphs with $N\geq 4$ vertices
having a connecting edge between vertices $\#3$ and $\#4$,
and their counterparts which are obtained when deleting this very edge.
The latter ones are the graphs from $\cup_{N\geq 4}\gZZNx$ -- 
up to the labeling of the two white vertices; therefore it follows as above
that
\be{eta-zeta4}
   \eta(R_1,R_2,R_3,R_4)
   \,=\, e^{-\beta u(|R_3-R_4|)} \zeta^{(4)}(R_3,R_4,R_1,R_2)\,.
\ee

\begin{proposition}
\label{Prop:chi4}
If $u$ is a stable and regular pair potential and the activity $z$ 
satisfies \req{z} then there exists $C>0$ independent of the size of 
$\Lambda$, such that
\be{Prop:chi4}
   \bigl|\chi_\Lambda^{(4)}(R_1,R_2,R_3,R_4)\bigr|
   \,\leq\, C e^{-\beta u(|R_3-R_4|)}\sum_{i=1}^2\sum_{j=3}^4 G(R_i-R_j)
\ee
for all $R_1,R_2,R_3,R_4\in\Lambda$.
\end{proposition}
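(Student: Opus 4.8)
The plan is to combine the decomposition~\req{chi4-identity} with the bounds already at hand. Write $\rho_\Lambda^{(1)}\le\bar\rho$ for a bound uniform in $\Lambda$ (a classical consequence of Ruelle's estimates under~\req{z}). Applying Proposition~\ref{Prop:chi3} to $\chi_\Lambda^{(3)}(R_1,R_3,R_4)$ and $\chi_\Lambda^{(3)}(R_2,R_3,R_4)$ — with $(R_1,R_3,R_4)$ and $(R_2,R_3,R_4)$ in the roles of $(R_1,R_2,R_3)$ — and multiplying by $\rho_\Lambda^{(1)}(R_2)$ and $\rho_\Lambda^{(1)}(R_1)$, respectively, the last two terms in~\req{chi4-identity} are bounded by $\bar\rho\,w\,e^{-4\beta B}\,e^{-\beta u(|R_3-R_4|)}$ times $G(R_3-R_1)+G(R_4-R_1)$ and $G(R_3-R_2)+G(R_4-R_2)$. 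Since $G$ is even, these two contributions already reproduce the right-hand side of~\req{Prop:chi4}, so it remains only to bound $\eta$, and by~\req{eta-zeta4} this amounts to showing that $|\zeta^{(4)}(R_3,R_4,R_1,R_2)|$ is at most a $\Lambda$-independent constant times $\sum_{i=1}^2\sum_{j=3}^4 G(R_i-R_j)$.

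For the latter I would follow the proof of Lemma~\ref{Lem:zetam} step by step. Eliminating vertex $\#1$ from the graphs in $\gZZNx$ gives~\req{Lem:varphip}, and inserting this into~\req{zetam} with $m=4$ and using the definition~\req{Ueltschi} of $\psi$ yields
\bdm
   |\zeta^{(4)}(R_3,R_4,R_1,R_2)|
   \,\le\, e^{-4\beta B}\sum_{N=4}^\infty\frac{(ze^{2\beta B})^N}{(N-4)!}
      \sum_{\gF\in\gFF_{\I_2,\J_{2,N}}}\int_{\Lambda^{N-4}}
      \Bigl(\!\!\prod_{(i,j)\in\gF}\!\!|f_{ij}|\Bigr)\dRR_{4,N}\,,
\edm
where now the two white roots sit at $R_3$ and $R_4$ and the two distinguished black vertices $\#3,\#4$ sit at $R_1$ and $R_2$. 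I would then split the forest sum according to which of the two trees contains $\#3$ and which contains $\#4$, obtaining four groups. If $\#3$ and $\#4$ lie in different trees, classical graph integral calculus (exactly as in Lemma~\ref{Lem:zetam}) factorises the group into a product of two $\tau_\Lambda^{(2)}$-type series, each majorised by $G$ via Lemma~\ref{Lem:Penrose}; this produces $G(R_3-R_1)G(R_4-R_2)$ and $G(R_3-R_2)G(R_4-R_1)$, which one bounds by $\tfrac12\|G\|_\infty$ times the corresponding sum of two $G$'s, using that $G$ is bounded (Proposition~\ref{Prop:G}).

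If $\#3$ and $\#4$ lie in the same tree, graph integral calculus factorises the group into a $\tau_\Lambda^{(1)}$-type factor (at most $w$ by Lemma~\ref{Lem:Penrose}) times the contribution of the trees that are rooted at one white vertex $r\in\{R_3,R_4\}$ and contain both distinguished black vertices $R_1$ and $R_2$. The heart of the argument, and what genuinely goes beyond Lemma~\ref{Lem:zetam}, is to show that this last contribution is at most a constant times $G(r-R_1)+G(r-R_2)$. I would obtain this by decomposing such a tree at the vertex $p$ at which the path from $r$ to $R_1$ and the path from $r$ to $R_2$ separate: this exhibits the tree as three subtrees meeting only in $p$ (rooted respectively at $r$, $p$, $p$ and reaching $p$, $R_1$, $R_2$), so that graph integral calculus bounds the contribution by a ``boundary'' term of the type $G(r-R_1)G(R_1-R_2)+G(r-R_2)G(R_1-R_2)+G(r-R_1)G(r-R_2)$, which is harmless because $G$ is bounded, plus the integral $\int_\Lambda G(r-p)\,G(p-R_1)\,G(p-R_2)\,\rmd p$. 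Controlling this last integral by a constant multiple of $G(r-R_1)+G(r-R_2)$ — exploiting both the boundedness and the integrability of $G$ together with the resolvent form~\req{convint} of $\Ghat$, which pins down the convolution powers of $|f|$ that occur — is the one delicate point, and it is the main obstacle.

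Putting the four groups together gives the asserted bound on $|\zeta^{(4)}(R_3,R_4,R_1,R_2)|$, hence on $|\eta|$ via~\req{eta-zeta4}, and combining with the two $\chi_\Lambda^{(3)}$ contributions yields~\req{Prop:chi4} with a constant $C$ assembled from $\bar\rho$, $w$, $e^{\beta B}$, $\|G\|_\infty$ and $\|G\|_1$, all of which are independent of $|\Lambda|$.
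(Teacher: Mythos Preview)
Your overall strategy coincides with the paper's: use the decomposition~\req{chi4-identity}, bound the two $\chi_\Lambda^{(3)}$ terms via Proposition~\ref{Prop:chi3} together with a uniform bound on $\rho_\Lambda^{(1)}$, reduce $\eta$ to $\zeta^{(4)}$ via~\req{eta-zeta4}, and control $\zeta^{(4)}$ by the forest sum exactly as in~\req{zeta4estimate}. The paper also splits this forest sum into four pieces $S_l^{(4)}$ according to which tree contains the two distinguished black vertices, and the two ``different tree'' cases are handled just as you describe, yielding products $G\cdot G$ that are then majorised using $\|G\|_\infty$.

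The divergence --- and the genuine gap in your proposal --- is in the ``same tree'' case. Decomposing at the branching vertex $p$, which is generically one of the integrated vertices, lands you on the integral $\int G(r-p)\,G(p-R_1)\,G(p-R_2)\,\rmd p$, which you correctly flag as an obstacle and leave unresolved. This is not innocuous: majorising one factor by $\|G\|_\infty$ gives only $(G*G)(r-R_1)$, and since the $k$th convolution power of $|f|$ enters $G*G$ with coefficient $(k-1)w^{k+2}$ rather than $w^{k+1}$, a pointwise inequality $G*G\le C\,G$ does not follow from the resolvent form~\req{convint}. So the route you sketch does not close.

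The paper bypasses the triple-$G$ integral by decomposing at one of the two \emph{fixed} black vertices instead of at the branching point. For $S_3^{(4)}$ (both $\#3$ and $\#4$ in the tree rooted at $\#1$) it distinguishes whether the path from $\#1$ to $\#4$ passes through $\#3$, or the path from $\#1$ to $\#3$ passes through $\#4$. In either sub-case the tree is obtained by glueing a tree rooted at $\#1$ containing one fixed vertex to a second tree rooted at that fixed vertex and containing the other, which gives directly
\[
   S_{31}^{(4)}\,\le\,\frac{w}{ze^{2\beta B}}\,G(R_1-R_3)\,G(R_3-R_4)\,,\qquad
   S_{32}^{(4)}\,\le\,\frac{w}{ze^{2\beta B}}\,G(R_1-R_4)\,G(R_3-R_4)\,,
\]
and analogously for $S_4^{(4)}$. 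Bounding one $G$-factor by $\|G\|_\infty$ then yields the required single-$G$ terms with no integral over an auxiliary vertex. In your relabelled variables this amounts to splitting the tree at the vertex sitting at $R_1$ or at $R_2$, rather than at $p$; your ``boundary'' terms $G(r-R_1)G(R_1-R_2)$ and $G(r-R_2)G(R_1-R_2)$ are precisely what the paper obtains, and they already suffice.
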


\begin{proof}
From \req{zetam}, \req{Lem:varphip}, and the definition~\req{Ueltschi} of 
$\psi(\I_2;\J_{2,N})$ we obtain as in the proof of Lemma~\ref{Lem:zetam} that
\be{zeta4estimate}
   \bigl|\zeta^{(4)}(R_1,R_2,R_3,R_4)\bigr|
   \,\leq\, e^{-4\beta B}(S_1^{(4)}+S_2^{(4)}+S_3^{(4)}+S_4^{(4)})\,,
\ee
where
\bdm
   S_l^{(4)} 
   \,=\, \sum_{N=4}^\infty \!\frac{(ze^{2\beta B})^N}{(N-4)!}
         \sum_{\gF_l}
         \int_{\Lambda^{N-4}} 
           \Bigl(\!\!\!\!\prod_{(i,j)\in \gF_l}\!\!\!\!|f_{ij}|\Bigr)
         \dRR_{4,N}\,, \qquad l=1,\dots,4\,,
\edm
where the inner sum varies over all those forests 
$\gF_l\in\gFF_{\I_2,\J_{2,N}}$, for which in case of
\begin{itemize}
\item[$l=1$:] vertex $\#3$ belongs to the tree rooted in vertex $\#1$, 
and vertex $\#4$ does not;
\item[$l=2$:] vertex $\#4$ belongs to the tree rooted in vertex $\#1$,
and vertex $\#3$ does not;
\item[$l=3$:] vertex $\#3$ and $\#4$ belong to the same tree rooted in 
vertex $\#1$;
\item[$l=4$:] vertex $\#3$ and $\#4$ belong to the same tree rooted in 
vertex $\#2$.
\end{itemize}
Standard graph analysis and Lemma~\ref{Lem:Penrose} immediately lead to 
bounds for the first two cases, namely
\bdm
   S_1^{(4)} \,\leq\, G(R_1-R_3)G(R_2-R_4)\,, \qquad
   S_2^{(4)} \,\leq\, G(R_1-R_4)G(R_2-R_3)\,.
\edm

Concerning $S_3^{(4)}$ we first consider those forests (in $S_{31}^{(4)}$, say)
where the connecting path between vertices $\#1$ and $\#4$ passes 
through vertex $\#3$, and the remaining forests (in $S_{32}^{(4)}$)
where the path between vertices $\#1$ to $\#3$ passes through vertex~$\#4$. 
In the first case the trees rooted in vertex~$\#1$ can be constructed by 
glueing together a tree rooted in vertex $\#1$ and containing vertex $\#3$ 
and a second tree rooted in vertex $\#3$ and containing vertex $\#4$;
this yields the bound
\bdm
   S_{31}^{(4)} \,\leq\, \frac{w}{ze^{2\beta B}}\,G(R_1-R_3)G(R_3-R_4)\,,
\edm
where the numerator is due to the fact that
vertex $\#3$ is a joint vertex of the two trees that are glued together, and
the extra factor $w$ stems from the tree rooted in vertex $\#2$.

Likewise, we obtain corresponding bounds for $S_{32}^{(4)}$ and $S_4^{(4)}$,
namely
\bdm
   S_{32}^{(4)} \,\leq\, \frac{w}{ze^{2\beta B}}\,G(R_1-R_4)G(R_3-R_4)
\edm
and 
\bdm
   S_4^{(4)} \,\leq\, \frac{w}{ze^{2\beta B}}\,\bigl(G(R_2-R_3) + G(R_2-R_4)\bigr)
                     G(R_3-R_4)\,.
\edm
Since $G$ is bounded, cf.~Proposition~\ref{Prop:G}, we finally obtain by
inserting all these bounds into \req{zeta4estimate} that
\bdm
   \bigl|\zeta^{(4)}(R_1,R_2,R_3,R_4)\bigr| 
   \,\leq\, C
    \bigl(G(R_1-R_3)+G(R_2-R_3)+G(R_1-R_4)+G(R_2-R_4)\bigr)\,.
\edm
Together with \req{eta-zeta4} this yields
\bdm
   \bigl|\eta(R_1,R_2,R_3,R_4)\bigr|
   \,\leq\, C e^{-\beta u(|R_3-R_4|)} \sum_{i=1}^2\sum_{j=3}^4 G(R_i-R_j)\,.
\edm
From Proposition~\ref{Prop:chi3} it follows that a similar inequality
(with a different constant) holds true for the last two terms 
of \req{chi4-identity} either, hence the proof is done.
\end{proof}

\section{Differentiability of the pair correlation function in
$\boldsymbol{L^1(\R^3)}$}
\label{Sec:L1derivative}
Throughout this section we consider perturbations $\utilde=u+v$
of a given potential $u$ that satisfies Assumption~\ref{Ass:u}, where
$v\in\V$ with $\norm{v}_\V\leq t_0/2$ is kept fixed. Associated with $u$ and
two finite index sets $\I$ and $\J$ with $\I\cap\J=\emptyset$
are the Ruelle functions $\varphi(\I;\J)$ of \req{varphi-Ruelle}, and
we will associate corresponding Ruelle functions $\phitilde(\I;\J)$ with
the perturbed potential $\utilde$. Later on we also resort to the
pair correlation function $\omtilde_\Lambda^{(2)}$ corresponding to the 
grand canonical ensemble with interaction potential $\utilde$.

We need a few auxiliary estimates from \cite{Hank16a}. The first
one, compare Lemma~3.2 in \cite{Hank16a},
concerns the functions $d_{\I,j^*}$ of \req{dm} with $j^*=j^*(\I,\RR_\I)$ 
selected as in \req{jstar2}:
If $\dtilde_{\I,j^*}$ is the corresponding function associated
with $\utilde$ and if $\norm{v}_\V\leq t_0/2$ then there holds
\begin{subequations}
\label{eq:Lem:d-derivative-all}
\begin{align}
\label{eq:dm-diff}
   \norm{\dtilde_{\I,j^*}-d_{\I,j^*}}_{(\R^3)^{|\I|}}
   &\,\leq\, \frac{2e^{2\beta B}}{t_0}\,\norm{v}_{\V}\,,\\[1ex]
\label{eq:18.x}
   \norm{\dtilde_{\I,j^*} - d_{\I,j^*} - (\partial d_{\I,j^*})v}_{(\R^3)^{|\I|}}
   &\,\leq\, \frac{4e^{2\beta B}}{t_0^2}\,\norm{v}_{\V}^2 \,,
\end{align}
\end{subequations}
where $\partial d_{\I,j^*}$ is the Fr\'echet derivative of $d_{\I,j^*}$ 
whose specific form is given in \cite{Hank16a} but is not relevant 
for our purposes below.
Take note that the estimates \req{Lem:d-derivative-all} make use of the fact
that the index $j^*(\I,\RR_\I)$ does not depend on $v$ 
because of our smallness assumption, cf.~\req{jstar2}.
Second, for $\K\subset\N$ let $k_\K$ be given by \req{kernel} and
$\ktilde_\K$ be the corresponding function associated with $\utilde$.
Then, since $\norm{v}_\V\leq t_0$, there exists a constant $\Cbeta>0$ such that
\begin{subequations}
\label{eq:kn-induction}
\begin{align}
\label{eq:kn-induction1}
   \sup_{R\in\R^3}\bigl\|\ktilde_\K(R)-k_\K(R)\bigr\|_{L^1((\R^3)^n)}
   &\,\leq\, n\Cbeta \cbeta^{n-1}\,\norm{v}_{\V}\,,\\[1ex]
\label{eq:kn-induction2}
   \sup_{R\in\R^3}
      \big\|\ktilde_\K(R) - k_\K(R) - \bigl((\partial k_\K)v\bigr)(R)
      \bigr\|_{L^1(\R^3)^n)}
   &\,\leq\, n^2 \Cbeta \cbeta^{n-1} \norm{v}^2_{\V}\,,
\end{align}
\end{subequations}
where $n=|\K|$; see the proof of Proposition~3.3 
in \cite{Hank16a}; again, the specific form of the Fr\'echet derivative
$\partial k_\K$ does not matter.

Now we can estimate the difference between the Ruelle functions
associated with $u$ and $\utilde=u+v$.

\begin{lemma}
\label{Lem:Ruelle-diff}
Under the assumptions of this section let
$\I,\J\subset\N$ be two finite index sets with $\I\neq\emptyset$ and
$\I\cap\J=\emptyset$. Then there holds
\be{Ruelle-diff-induction}
\begin{aligned}
   &\int_{\Lambda^{|\J|}}
     \Bigl|
        \phitilde(\I;\J) \,-\, \varphi(\I;\J)
     \Bigr|\dRR_\J\\[1ex]
   &\qquad \qquad \,\leq\,  
   \bigl(|\I\cup\J|-1\bigr)\,\frac{2\cbeta+t_0 \Cbeta}{t_0\cbeta}\,
   (|\J|!)\cbeta^{|\J|}
   \bigl(e^{2\beta B+1}\bigr)^{|\I\cup\J|-1}\norm{v}_{\V}\,.
\end{aligned}
\ee
\end{lemma}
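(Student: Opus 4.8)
The plan is to prove \req{Ruelle-diff-induction} by induction on $|\I\cup\J|$, exploiting Ruelle's recursion \req{Ruelle-recursion} applied to \emph{both} $\varphi$ and $\phitilde$ with the \emph{same} pivot $j^*=j^*(\I,\RR_\I)$; this is legitimate precisely because, under the smallness assumption $\norm{v}_\V\le t_0/2$, the index $j^*$ chosen in accordance with \req{jstar2} may be taken independent of $v$. For the base case $|\I\cup\J|=1$ one has $\J=\emptyset$ and $\varphi(\I;\emptyset)=\phitilde(\I;\emptyset)$ (either both are $1$ with the convention in force, or the recursion terminates trivially), so the difference vanishes and the bound holds with room to spare. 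The inductive step is where the work lies.

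For the inductive step I would write, using \req{Ruelle-recursion} for $\phitilde$ and $\varphi$,
\bdmal
   \phitilde(\I;\J) - \varphi(\I;\J)
   &= \bigl(\dtilde_{\I,j^*}-d_{\I,j^*}\bigr)\sum_{\K\subset\J}\ktilde_\K(R_{j^*})\,\phitilde(\I\cup\K\setminus\{j^*\};\J\setminus\K)\\
   &\quad + d_{\I,j^*}\sum_{\K\subset\J}\bigl(\ktilde_\K(R_{j^*})-k_\K(R_{j^*})\bigr)\phitilde(\I\cup\K\setminus\{j^*\};\J\setminus\K)\\
   &\quad + d_{\I,j^*}\sum_{\K\subset\J}k_\K(R_{j^*})\Bigl(\phitilde(\I\cup\K\setminus\{j^*\};\J\setminus\K)-\varphi(\I\cup\K\setminus\{j^*\};\J\setminus\K)\Bigr),
\edmal
a telescoping three–term splitting. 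Then I integrate $\bigl|\cdot\bigr|$ over $\RR_\J\in\Lambda^{|\J|}$, split $\dRR_\J=\dRR_\K\,\dRR_{\J\setminus\K}$ and estimate term by term. For the first term I bound $|\dtilde_{\I,j^*}-d_{\I,j^*}|$ by \req{dm-diff}, pull out $\sup_R\|\ktilde_\K(R)\|_{L^1}\le \sup_R(\|k_\K(R)\|_{L^1}+n\Cbeta\cbeta^{n-1}\norm v_\V)$ using \req{kn-induction1} together with the standard bound $\|k_\K\|_{L^1}\le\cbeta^{|\K|}$ coming from \req{kernel} and \req{cbeta}, and use Ruelle's a priori estimate \req{Ruelle-induction} applied to $\phitilde$ for the remaining $\varphi$–integral. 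For the second term I use \req{kn-induction1} for the $\ktilde_\K-k_\K$ factor, $|d_{\I,j^*}|\le 1$ (it is a product of Boltzmann factors $e^{-\beta u}\le\dots$; more precisely it is bounded since $d$ is a product of positive factors, and the relevant bound follows from stability the way \req{Ueltschi-recursion-tmp} was derived), and again \req{Ruelle-induction} for $\phitilde$. For the third term I use the inductive hypothesis on $\phitilde-\varphi$ for the smaller index set $\I\cup\K\setminus\{j^*\}$, $\J\setminus\K$ (note $|\I\cup\K\setminus\{j^*\}\cup(\J\setminus\K)|=|\I\cup\J|-1$), with $|d_{\I,j^*}|\le 1$ and $\|k_\K\|_{L^1}\le\cbeta^{|\K|}$.

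The bookkeeping that makes it close is the combinatorial identity $\sum_{\K\subset\J}\binom{|\J|}{|\K|}^{-1}\cdots$ — more honestly, the same sum–over–subsets manipulation that produces the factorials in \req{Ruelle-induction}: for fixed $k=|\K|$ there are $\binom{|\J|}{k}$ choices of $\K$, the inner $\phitilde$–integral over $\J\setminus\K$ contributes $(|\J|-k)!\,\cbeta^{|\J|-k}(e^{2\beta B+1})^{|\I\cup\J|-2}$ (or the analogous factor with $|\J|-k-1$ in the recursive third term), and $\binom{|\J|}{k}(|\J|-k)!=|\J|!/k!$, so that $\sum_k \cbeta^{k}\cdot|\J|!/k!\cdot(\text{stuff})\le |\J|!\,\cbeta^{|\J|}e^{\dots}$ after collecting; one checks the geometric/exponential factors line up to give exactly $(e^{2\beta B+1})^{|\I\cup\J|-1}$ and the prefactor $(|\I\cup\J|-1)(2\cbeta+t_0\Cbeta)/(t_0\cbeta)$, where the third (recursive) term supplies the increment from $|\I\cup\J|-2$ to $|\I\cup\J|-1$ in the linear factor, the first two terms supply the $2\cbeta$ and $t_0\Cbeta$ pieces of the numerator, and one power of $\cbeta$ is absorbed against the $1/\cbeta$. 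I expect the main obstacle to be precisely this constant–chasing: verifying that the three contributions recombine into the single clean bound with the stated prefactor, in particular that the $\norm v_\V$ appears only to first power (which it does because the third term carries the inductive $\norm v_\V$ while the first two carry it explicitly, and no term is quadratic) and that the activity constraint \req{z} — hidden in \req{w}, \req{Ruelle-induction} and the convergence of all series involved — is used only through estimates already granted. Everything else is the routine subset–sum estimate pioneered by Ruelle.
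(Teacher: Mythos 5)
Your proposal is correct and is essentially the paper's own proof: the same induction on $|\I\cup\J|$, the same three-term telescoping decomposition obtained by applying Ruelle's recursion \req{Ruelle-recursion} to both $\varphi$ and $\phitilde$ with the $v$-independent pivot $j^*$ of \req{jstar2}, the same inputs \req{dm-diff}, \req{kn-induction1}, \req{Ruelle-induction} and the induction hypothesis, and the same subset-sum bookkeeping reducing $\sum_{\K\subset\J}$ to a sum over $p=|\K|$ with $\binom{|\J|}{p}(|\J|-p)!=|\J|!/p!$. Two small points you already half-flag yourself: the pointwise bound on $d_{\I,j^*}$ is $e^{2\beta B}$ (from \req{jstar2}), not $1$, and for the first term one should bound $\sup_R\|\ktilde_\K(R)\|_{L^1}\le\cbeta^{|\K|}$ directly from the regularity of $\utilde$ (property (iii) holds uniformly for all admissible perturbations) rather than via \req{kn-induction1}, which is what yields the exact constant $(2\cbeta+t_0\Cbeta)/(t_0\cbeta)$ in the statement.
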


\begin{proof}
Before we start we define
\be{Deltaphi}
   \Delta_0\varphi(\I;\J) \,=\, 
   \phitilde(\I;\J)\,-\,\varphi(\I;\J)\,.
\ee
Now the proof proceeds by induction on $|\I\cup\J|$. When $|\I\cup\J|=1$, i.e.,
when $\I$ consists of a single element and $\J=\emptyset$, then
\bdm
   \varphi_{\I,\J} \,=\, \phitilde_{\I,\J} \,=\, 1
\edm
by virtue of \req{varphi-Ruelle},
and hence the assertion is obviously correct. Note that
\req{Ruelle-diff-induction} is also true when 
$\I=\emptyset$ and $\J\neq\emptyset$ is arbitrary; this will be used in the
induction step.

Concerning the induction, we use \req{Ruelle-recursion} with 
$i^*=j^*(\I,\RR_\I)$ of \req{jstar2} for 
$|\I\cup\J|\geq 2$, $\I\neq\emptyset$, to derive the recursion
\bdmal
   \Delta_0\varphi(\I;\J) 
   &\,=\,
    (\dtilde_{\I,j^*}-d_{\I,j^*})
    \sum_{\K\subset\J} 
       \ktilde_\K(R_{j^*})
       \phitilde(\I\cup\K\setminus\{j^*\};\J\setminus\K)\\
   &\quad \ + 
    d_{\I,j^*} \sum_{\K\subset\J}
                \bigl(\ktilde_\K(R_{j^*})-k_\K(R_{j^*})\bigr)
                \phitilde(\I\cup\K\setminus\{j^*\};\J\setminus\K)\\
   &\quad \ + 
    d_{\I,j^*} \sum_{\K\subset\J}
                k_\K(R_{j^*})
                \Delta_0\varphi(\I\cup\K\setminus\{j^*\};\J\setminus\K)\,.
\edmal
Integrating over $\RR_\J$ and utilizing
\req{dm-diff}, \req{kn-induction1}, Ruelle's estimate \req{Ruelle-induction},
and the induction hypothesis~\req{Ruelle-diff-induction} we thus obtain 
\bdmal
   \int_{\Lambda^{|\J|}}
    \bigl|\Delta_0\varphi(\I,\J)&\bigr|\dRR_\J
    \,\leq\, \frac{1}{e}(e^{2\beta B+1})^{|\I|+|\J|-1}\cbeta^{|\J|}\,\cdot\\[1ex]
   &\sum_{\K\subset\J} (|\J|-|\K|)!\, 
    \Bigl( \frac{2}{t_0} \,+\, |\K|\,\frac{\Cbeta}{\cbeta}
           \,+\, \bigl(|\I\cup\J|-2\bigr)\frac{2\cbeta+t_0 \Cbeta}{t_0\cbeta}
    \Bigr)\norm{v}_\V \,.
\edmal
Since the right-hand side only depends on the number $p$ of elements in $\K$
we can sum over $p$ instead which gives the upper bound
\bdmal
   \int_{\Lambda^{|\J|}}
   \bigl|\Delta_0\varphi(\I;\J)\bigr|\dRR_\J
   \,\leq\, &\,\frac{1}{e}(e^{2\beta B+1})^{|\I|+|\J|-1} \cbeta^{|\J|}(|\J|!) 
             \norm{v}_\V \,\cdot\\[1ex]
   &\quad
    \sum_{p=0}^\infty \frac{1}{p!}\,
    \Bigl( \frac{2}{t_0}\,+\, p\,\frac{\Cbeta}{\cbeta}
           \,+\, \bigl(|\I\cup\J|-2\bigr)\frac{2\cbeta+t_0 \Cbeta}{t_0\cbeta}
    \Bigr)
\edmal
which coincides with \req{Ruelle-diff-induction}.
\end{proof}

Under the assumptions of Lemma~\ref{Lem:Ruelle-diff} and
with the same notation as before we let
\bdm
   \varphi'(\I;\J) \,=\, 0 \qquad \text{for} \quad |\I|=0 \quad \text{or} \quad
                           |\I|=1,\,|\J|=0\,,
\edm
and for $\I\cap\J=\emptyset$, $|\I\cup\J|\geq 2$, $|\I|\neq0$, 
and with $j^*=j^*(\I,\RR_\I)$ we define recursively
\be{varphi-prime}
\begin{aligned}
   \varphi'(\I;\J) 
   &\,=\, (\partial d_{\I,j^*})v
          \sum_{\K\subset\J} 
            k_\K(R_{j^*})
            \varphi(\I\cup\K\setminus\{j^*\};\J\setminus\K)\\
   &\quad \ + 
    d_{\I,j^*} \sum_{\K\subset\J}
                \bigl((\partial k_\K)v\bigr)(R_{j^*})
                \varphi(\I\cup\K\setminus\{j^*\};\J\setminus\K)\\
   &\quad \ + 
    d_{\I,j^*} \sum_{\K\subset\J}
                k_\K(R_{j^*})
                \varphi'(\I\cup\K\setminus\{j^*\};\J\setminus\K)\,.
\end{aligned}
\ee
Take note that $\varphi'$ depends linearly on $v$.

\begin{lemma}
\label{Lem:Ruelle-derivative}
Under the same assumptions as in Lemma~\ref{Lem:Ruelle-diff} there exists
a constant $C$ such that
\bdmal
   &\int_{\Lambda^{|\J|}}
      \Bigl|
         \phitilde(\I;\J) \,-\, \varphi(\I;\J) \,-\, \varphi'(\I;\J)
      \Bigr|\dRR_\J\\[1ex]
   &\qquad \qquad \,\leq\,  
    C (|\I|+|\J|-1)^2 (|\J|!)\,\cbeta^{|\J|}
            (e^{2\beta B+1})^{|\I|+|\J|-1}
            \norm{v}_{\V}^2\,.
\edmal
The constant $C$ only depends on $u$ and on $t_0$, cf.~\req{gamma},
but neither on the size of $\Lambda$ nor on $\RR_\I\in\Lambda^{|\I|}$.
\end{lemma}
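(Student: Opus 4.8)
The plan is to establish the quadratic bound by induction on $|\I\cup\J|$, parallelling the inductive structure already used in Lemma~\ref{Lem:Ruelle-diff}. Define the second-order remainder
\bdm
   \Delta_1\varphi(\I;\J) \,=\, \phitilde(\I;\J)-\varphi(\I;\J)-\varphi'(\I;\J)\,,
\edm
which vanishes identically when $|\I\cup\J|\leq 1$ (there $\phitilde=\varphi=1$ and $\varphi'=0$), so the base case is trivial; as in the previous lemma we also record that the asserted estimate holds vacuously when $\I=\emptyset$. For the induction step with $|\I\cup\J|\geq2$, $\I\neq\emptyset$, I would subtract the recursion~\req{varphi-prime} for $\varphi'$ from the difference of the Ruelle recursions for $\phitilde$ and $\varphi$ (as expanded at the start of the proof of Lemma~\ref{Lem:Ruelle-diff}), and regroup the resulting terms into three bands: (a) a ``genuinely quadratic'' part collecting products of two first-order increments --- namely $(\dtilde_{\I,j^*}-d_{\I,j^*})(\ktilde_\K-k_\K)$, $(\dtilde_{\I,j^*}-d_{\I,j^*})\Delta_0\varphi$, and $(\ktilde_\K-k_\K)\Delta_0\varphi$ (where $\Delta_0\varphi$ is as in \req{Deltaphi}); (b) second-order Taylor remainders of the individual factors, controlled by $\dtilde_{\I,j^*}-d_{\I,j^*}-(\partial d_{\I,j^*})v$ via \req{18.x} and by $\ktilde_\K-k_\K-((\partial k_\K)v)$ via \req{kn-induction2}; and (c) the recursive term $d_{\I,j^*}\sum_{\K}k_\K(R_{j^*})\Delta_1\varphi(\I\cup\K\setminus\{j^*\};\J\setminus\K)$ to which the induction hypothesis applies.

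Next I would integrate over $\RR_\J$ and bound each band. For band (c) the induction hypothesis gives a factor $C(|\I\cup\J|-2)^2$, and summing over subsets $\K\subset\J$ of fixed cardinality $p$ as in Lemma~\ref{Lem:Ruelle-diff} --- replacing $\sum_{\K\subset\J}$ by $\sum_{p=0}^{|\J|}\binom{|\J|}{p}$ and using $(|\J|-p)!\binom{|\J|}{p}=|\J|!/p!$ with $\sum_p 1/p!\leq e$ --- reproduces the combinatorial prefactor $(|\J|!)\cbeta^{|\J|}(e^{2\beta B+1})^{|\I|+|\J|-1}$ cleanly. For bands (a) and (b) the new ingredients are: $\norm{\dtilde_{\I,j^*}-d_{\I,j^*}}_\infty\leq\text{const}\cdot\norm v_\V$ from \req{dm-diff}, $\norm{\ktilde_\K-k_\K}_{L^1((\R^3)^p)}\leq p\Cbeta\cbeta^{p-1}\norm v_\V$ from \req{kn-induction1}, the first-order bound \req{Ruelle-diff-induction} on $\int|\Delta_0\varphi|$, Ruelle's bound \req{Ruelle-induction} on $\int|\varphi|$, the boundedness of $d_{\I,j^*}$ (it is a product of factors $e^{-\beta u}$ bounded via \req{jstar2} by $e^{2\beta B}$), and the second-order estimates \req{18.x} and \req{kn-induction2}; the products of two $\norm v_\V$-factors yield the $\norm v_\V^2$ on the right, while each of these bands contributes at most linearly in $|\I\cup\J|-1$ (band (a) through the $|\I\cup\J|-1$ already present in \req{Ruelle-diff-induction}), which is absorbed into the $(|\I|+|\J|-1)^2$ once combined with band (c). The constant $C$ thus depends only on $t_0$, $\beta B$, $\cbeta$, $\Cbeta$ --- i.e.\ only on $u$ and $t_0$ --- and in particular not on $\Lambda$ nor on $\RR_\I$, since $j^*(\I,\RR_\I)$ is chosen independently of $v$ by \req{jstar2}.

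The main obstacle is purely bookkeeping: keeping the subset-sum combinatorics aligned so that after integration every band carries exactly the same prefactor $(|\J|!)\cbeta^{|\J|}(e^{2\beta B+1})^{|\I|+|\J|-1}$, and verifying that the cross terms in band (a) involving $\Delta_0\varphi(\I\cup\K\setminus\{j^*\};\J\setminus\K)$ --- where the first-order bound \req{Ruelle-diff-induction} is applied with index sets of total size $|\I\cup\J|-1$ --- recombine with the factor $(\dtilde_{\I,j^*}-d_{\I,j^*})$ or $(\ktilde_\K-k_\K)$ to give the correct power of $\norm v_\V$ and the correct polynomial growth in $|\I\cup\J|$. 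One subtlety worth flagging: in the term $d_{\I,j^*}\sum_\K(\ktilde_\K-k_\K)\,\Delta_0\varphi(\cdots)$ both factors are first order, so one must use the \emph{supremum} bound \req{kn-induction1} on $\ktilde_\K-k_\K$ (over the variable $R_{j^*}$) together with the \emph{integrated} bound \req{Ruelle-diff-induction} on $\Delta_0\varphi$, since the $L^1$ norm in \req{kn-induction1} is taken only over the $\K$-variables $\RR_\K$, whereas $\Delta_0\varphi(\I\cup\K\setminus\{j^*\};\J\setminus\K)$ must be integrated over the remaining $\RR_{\J\setminus\K}$; the two integrations are over disjoint coordinate blocks, so Fubini splits the double integral into a product of the two known bounds. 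Everything else is the same Lambert-$W$-free estimation already carried out twice in this section.
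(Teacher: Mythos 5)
Your proposal is correct and follows essentially the same route as the paper: the same induction on $|\I\cup\J|$, the same six-term telescoping of the two recursions (which you merely group into three bands), the same use of \req{18.x}, \req{kn-induction2}, \req{dm-diff}, \req{kn-induction1}, \req{Ruelle-induction}, \req{Ruelle-diff-induction} and the induction hypothesis, and the same subset-cardinality summation yielding $\sum_p 1/p!\leq e$. The Fubini subtlety you flag for the cross term $d_{\I,j^*}\sum_\K(\ktilde_\K-k_\K)\Delta_0\varphi(\cdots)$ is exactly how the paper's estimate works, and your accounting of the linear-in-$N$ contributions from the $\Delta_0\varphi$ terms combining with the quadratic term from the induction hypothesis reproduces the paper's $(C+2CN+CN^2)=C(N+1)^2$ bookkeeping.
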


\begin{proof}
Again the proof proceeds by induction on $|\I|+|\J|$, where for 
$|\I|+|\J|=1$ there is nothing to prove. 
Utilizing the notations~\req{Deltaphi} and 
\be{Taylor1}
   \Delta_1\varphi(\I;\J)
   \,=\, \phitilde(\I;\J)-\varphi(\I;\J)-\varphi'(\I;\J) \\[1ex]
\ee
for the zeroth and first order Taylor remainders of $\varphi(\I;\J)$
we can use the recursions \req{Ruelle-recursion}
and \req{varphi-prime} for $N:=|\I|+|\J|-2\geq 0$, $|\I|\neq 0$, to obtain
\bdmal
   &\Delta_1\varphi(\I;\J) \,=\, 
    \bigl(\dtilde_{\I,j^*}-d_{\I,j^*}-(\partial d_{\I,j^*})v\bigr)
    \sum_{\K\subset\J} k_\K(R_{j^*}) 
       \varphi(\I\cup\K\setminus\{j^*\};\J\setminus\K)\\
   &\quad \quad \ +\, 
    (\dtilde_{\I,j^*}-d_{\I,j^*})
    \sum_{\K\subset\J}\bigl(\ktilde_\K(R_{j^*})-k_\K(R_{j^*})\bigr)
       \varphi(\I\cup\K\setminus\{j^*\};\J\setminus\K)\\
   &\quad \quad \ +\, 
    (\dtilde_{\I,j^*}-d_{\I,j^*})\sum_{\K\subset\J}\ktilde_\K(R_{j^*})
       \Delta_0\varphi(\I\cup\K\setminus\{j^*\};\J\setminus\K)\\
   &\quad \quad \ +\, 
    d_{\I,j^*} \sum_{\K\subset\J}
    \bigl(\ktilde_\K(R_{j^*})-k_\K(R_{j^*})-((\partial k_\K)v)(R_{j^*})\bigr)
    \varphi(\I\cup\K\setminus\{j^*\};\J\setminus\K)\\
   &\quad \quad \ +\, 
    d_{\I,j^*} \sum_{\K\subset\J}\bigl(\ktilde_\K(R_{j^*})-k_\K(R_{j^*})\bigr)
    \Delta_0\varphi(\I\cup\K\setminus\{j^*\};\J\setminus\K)\\
   &\quad \quad \ +\, 
    d_{\I,j^*} \sum_{\K\subset\J}k_\K(R_{j^*})
    \Delta_1\varphi(\I\cup\K\setminus\{j^*\};\J\setminus\K)\,.
\edmal
Integrating over $\RR_\J\in\Lambda^{|\J|}$ and using the inequalities
\req{18.x}, \req{Ruelle-induction}, \req{dm-diff}, \req{kn-induction1},
\req{Ruelle-diff-induction}, \req{kn-induction2}, and the induction hypothesis
then we obtain in the same way as in the proof of Lemma~\ref{Lem:Ruelle-diff} 
that
\bdm
   \int_{\Lambda^{|\J|}} \bigl|\Delta_1\varphi(\I;\J)\bigr|\dRR_\J \\
   \,\leq\,
   \bigl(C \,+\, 2C N + C N^2\bigr) 
   (|\J|!)\,\cbeta^{|\J|} e^{(2\beta B+1)(N+1)} \norm{v}^2_{\V}\,,
\edm
provided that we let
\be{gamma}
   C \,=\, 
   \max\,\Bigl\{\,
\text{\footnotesize$  
                \frac{4\cbeta+2\Cbeta t_0+2\Cbeta t_0^2}
                     {t_0^2\cbeta} 
                \ , \,
                \frac{1}{2}
                \Bigl(\frac{2\cbeta+t_0\Cbeta}{t_0\cbeta}\Bigr)^2
                $}\,
         \Bigr\}\,.
\ee
Since $N+1=|\I|+|\J|-1$ the induction step is complete.
\end{proof}

Now we come to the main result of this section.

\begin{theorem}
\label{Thm:Henderson-derivative}
Let $u$ satisfy Assumption~\ref{Ass:u} and $z$ be constrained by \req{z}.
Then the thermodynamical limit $\omega^{(2)}(\,\cdot\,,0)$ 
of the corresponding pair correlation function~\req{Ursell2} is
Fr\'echet differentiable in $\L\bigl(\V,L^1(\R^3)\bigr)$.
\end{theorem}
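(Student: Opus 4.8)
The plan is to differentiate the connected-graph expansion~\req{qm-graph} of the pair correlation function term by term, to control the first-order Taylor remainder by means of Lemma~\ref{Lem:Ruelle-derivative}, and then to carry the resulting $\Lambda$-uniform bounds through the thermodynamical limit. Since $\omega_\Lambda^{(2)}$ is symmetric in its two arguments it suffices to study $R\mapsto\omega_\Lambda^{(2)}(0,R)$, and because the graph sum in~\req{qm-graph} coincides with the Ruelle function $\varphi(\{1\};\{2,\dots,N\})$ (vertex~$\#1$ being the white vertex) we may write
\bdm
   \omega_\Lambda^{(2)}(0,R)
   \,=\, \sum_{N=2}^\infty \frac{z^N}{(N-2)!}
         \int_{\Lambda^{N-2}} \varphi(\{1\};\{2,\dots,N\})\big|_{R_1=0,\,R_2=R}\dRR_{2,N}\,,
\edm
and likewise $\omtilde_\Lambda^{(2)}(0,R)$ in terms of $\phitilde$, the Ruelle function of $\utilde=u+v$. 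I would fix $j^*=j^*(\I,\RR_\I)$ as in~\req{jstar2} -- a selection that is independent of~$v$ -- and take as candidate directional derivative the series
\bdm
   (L_\Lambda v)(R)
   \,=\, \sum_{N=2}^\infty \frac{z^N}{(N-2)!}
         \int_{\Lambda^{N-2}} \varphi'(\{1\};\{2,\dots,N\})\big|_{R_1=0,\,R_2=R}\dRR_{2,N}\,,
\edm
with $\varphi'$ from~\req{varphi-prime}; since $\varphi'$ depends linearly on~$v$, so does $L_\Lambda$.

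By the two displays, the remainder $\omtilde_\Lambda^{(2)}(0,R)-\omega_\Lambda^{(2)}(0,R)-(L_\Lambda v)(R)$ is the series of the quantities $\Delta_1\varphi(\{1\};\{2,\dots,N\})$ integrated over $\RR_{2,N}\in\Lambda^{N-2}$. Integrating its modulus over $R=R_2\in\Lambda$ as well turns the $N$th term into $\int_{\Lambda^{N-1}}|\Delta_1\varphi(\{1\};\{2,\dots,N\})|\dRR_\J$ with $\J=\{2,\dots,N\}$, to which Lemma~\ref{Lem:Ruelle-derivative} applies with $\I=\{1\}$, $\RR_\I=0$, $|\I\cup\J|=N$ and $|\J|=N-1$; using $(N-1)!/(N-2)!=N-1$ and summing against $z^N/(N-2)!$ then produces the $\Lambda$-independent bound
\bdm
   \int_\Lambda \bigl|\,\omtilde_\Lambda^{(2)}(0,R)-\omega_\Lambda^{(2)}(0,R)-(L_\Lambda v)(R)\,\bigr|\rmd R
   \,\leq\, C_1\norm{v}_\V^2\,,
\edm
where $C_1=Cz\sum_{k\geq1}k^3(z\cbeta e^{2\beta B+1})^{k}<\infty$ by~\req{z} ($C$ being the constant of Lemma~\ref{Lem:Ruelle-derivative}). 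Combining this with Lemma~\ref{Lem:Ruelle-diff} through $|\varphi'|\leq|\Delta_0\varphi|+|\Delta_1\varphi|$ and $\norm{v}_\V\leq t_0/2<1$ gives, in the same way, a $\Lambda$-independent bound $\int_\Lambda|(L_\Lambda v)(R)|\,\rmd R\leq C_2\norm{v}_\V$.

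Finally I would let $\Lambda\nearrow\R^3$. Ruelle's compact convergence gives $\omega_\Lambda^{(2)}(0,R)\to\omega^{(2)}(R,0)$ and $\omtilde_\Lambda^{(2)}(0,R)\to\omtilde^{(2)}(R,0)$ pointwise, and monotone convergence extends all of the above bounds to $\Lambda=\R^3$; in particular $\int_{(\R^3)^{N-1}}|\varphi'(\{1\};\{2,\dots,N\})|\big|_{R_1=0}\dRR_\J$ is finite for each~$N$ and, weighted by $z^N/(N-2)!$, these sum to at most $C_2\norm{v}_\V$, so by Fubini the inner integral over $\RR_{2,N}$ is finite for a.e.\ $R$. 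Dominated convergence -- applied first to the integral over $\RR_{2,N}$ for fixed~$N$, then to the series over~$N$ with the summable majorant just obtained -- shows that $(L_\Lambda v)(R)$ converges for a.e.\ $R$ to the function $(Lv)(R)$ defined by the same series with $\R^3$ in place of~$\Lambda$; this $Lv$ is linear in~$v$ and, by Tonelli, satisfies $\norm{Lv}_{L^1(\R^3)}\leq C_2\norm{v}_\V$, so $L$ extends to an operator in $\L(\V,L^1(\R^3))$. Fatou's lemma applied to the displayed remainder estimate then gives $\norm{\omtilde^{(2)}(\,\cdot\,,0)-\omega^{(2)}(\,\cdot\,,0)-Lv}_{L^1(\R^3)}\leq C_1\norm{v}_\V^2$ for every $v$ with $\norm{v}_\V\leq t_0/2$, and letting $\norm{v}_\V\to0$ establishes the theorem with $\partial\omega^{(2)}=L$. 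The combinatorial book-keeping and the summation of the series (convergent precisely because of the activity constraint~\req{z}) are routine; the real obstacle is this last step, namely interchanging the thermodynamical limit with differentiation -- showing that the finite-volume candidates $L_\Lambda v$ admit an $L^1$-limit and that this limit is the Fr\'echet derivative of the \emph{limiting} correlation function -- which is legitimate only because the bounds in Lemmas~\ref{Lem:Ruelle-diff} and~\ref{Lem:Ruelle-derivative} do not depend on~$\Lambda$.
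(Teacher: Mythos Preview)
Your argument is correct and follows the same overall strategy as the paper: represent $\omega_\Lambda^{(2)}$ through the Ruelle function $\varphi(\{1\};\{2,\dots,N\})$, take $\varphi'$ of \req{varphi-prime} as the candidate derivative, apply Lemma~\ref{Lem:Ruelle-derivative} to get a $\Lambda$-independent quadratic bound on the Taylor remainder in $L^1$, and then push through the thermodynamical limit. The computation of the constant $C_1$ and the use of Lemma~\ref{Lem:Ruelle-diff} to control $\int|\varphi'|$ match the paper exactly.

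The only substantive difference lies in how the thermodynamical limit is taken. The paper first invokes the $L^\infty$ results of \cite{Hank16a} to identify $\omega_\Lambda'$ with the already-known derivative $(\partial\omega_\Lambda^{(2)})v$, then uses the compact convergence $(\partial\omega_\Lambda^{(2)})v\to(\partial\omega^{(2)})v$ from \cite{Hank16a}; the $L^1$ remainder bound is carried through by restricting to an arbitrary fixed box $\Lambda'\subset\Lambda$ and letting $|\Lambda|\to\infty$. You instead bypass the Part~I derivative convergence entirely: monotone convergence extends the bounds on $|\varphi'|$ to $\R^3$, dominated convergence identifies the pointwise limit $Lv$ of $L_\Lambda v$, and Fatou's lemma transfers the remainder estimate. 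Your route is slightly more self-contained (it needs only compact convergence of $\omega_\Lambda^{(2)}$ and $\omtilde_\Lambda^{(2)}$, not of their derivatives), whereas the paper's route has the side benefit of automatically identifying the $L^1$ derivative with the $L^\infty$ derivative $(\partial\omega^{(2)})v$ already constructed in \cite{Hank16a}.
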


\begin{proof}
From \cite{Hank16a} we know that the molecular
distribution functions $\rho_\Lambda^{(1)}$ and $\rho_\Lambda^{(2)}$ are 
Fr\'echet differentiable with respect to $L^\infty(\Lambda)$,
respectively $L^\infty(\Lambda^2)$. Hence, the function 
\bdm
   \omega_\Lambda^{(2)}(R_1,R_2)
   \,=\, \rho_\Lambda^{(2)}(R_1,R_2) \,-\, 
         \rho_\Lambda^{(1)}(R_1)\rho_\Lambda^{(1)}(R_2)\,,
   \qquad R_1,R_2\in\Lambda\,,
\edm
has a Fr\'echet derivative $\partial\omega_\Lambda^{(2)}$ (with respect to $u$)
in the same topology, and this implies
Fr\'echet differentiability with the same derivative also in $L^1(\Lambda^2)$. 
For a given $v\in\V$ define
\bdm
   \omega_\Lambda'(R_1,R_2) \,=\, 
   \sum_{N=2}^\infty \frac{z^N}{(N-2)!} 
      \int_{\Lambda^{N-2}} \varphi'(\{1\};\{2,\dots,N\})\dRR_{N-2}
\edm
with $\varphi'$ of \req{varphi-prime}.
Using the notation from \req{Taylor1} it follows from 
Lemma~\ref{Lem:Ruelle-derivative} that
\be{omdiff-integrability}
\begin{aligned}
   &\int_{\Lambda}\,
      \Bigl|
         \omtilde_\Lambda^{(2)}(R_1,R_2)-\omega_\Lambda^{(2)}(R_1,R_2)
         \,-\, \omega_\Lambda'(R_1,R_2)
      \Bigr|\dR_2\\[1ex]
   &\qquad
    \leq \int_\Lambda\sum_{N=2}^\infty \frac{z^N}{(N-2)!}
    \int_{\Lambda^{N-2}}
       \Bigl| \Delta_1\varphi\bigl(\{1\}\,;\{2,\dots,N\}\bigr)
       \Bigr|\dRR_{2,N}\dR_2\\[1ex]
   &\qquad
    \leq\, \sum_{N=2}^\infty
              C z^N (N-1)^3
              \bigl(\cbeta e^{2\beta B+1}\bigr)^{N-1}
              \norm{v}_{\V}^2 
    \,=\, C' \norm{v}_{\V}^2\,,
\end{aligned}
\ee
where the constant $C'$ is finite because of \req{z} and
independent of the size of $\Lambda$ and independent of the choice of
$R_1\in\Lambda$. Since $\omega_\Lambda'$ depends
linearly on $v$ this inequality reveals that
\bdm
   (\partial\omega_\Lambda^{(2)})v \,=\, \omega_\Lambda'\,.
\edm

From \cite{Hank16a} we know that
$(\partial\rho_\Lambda^{(2)})v\to(\partial\rho^{(2)})v$ and
$(\partial\rho_\Lambda^{(1)})v\to(\partial\rho^{(1)})v$ compactly as
$|\Lambda|\to\infty$. The latter is necessarily a constant 
denoted by $(\partial\rho_0)v$ in the sequel for brevity.
We also know that $\rho_\Lambda^{(1)}\to \rho_0$ compactly as 
$|\Lambda|\to\infty$. It thus follows that
\bdmal
   \bigl((\partial\omega_\Lambda^{(2)})v\bigr)(R_1,R_2) \,\to\,
   &\,\bigl((\partial\rho^{(2)})v\bigr)(R_1,R_2)
   \,-\, 2\rho_0(\partial\rho_0)v\\[1ex]
   \,=\, &\,\bigl((\partial\omega^{(2)})v\bigr)(R_1,R_2)\,, \qquad
   |\Lambda|\to\infty\,,
\edmal
uniformly on bounded subsets of $(\R^3)^2$, where $\partial\omega^{(2)}$
is the Fr\'echet derivative of $\omega^{(2)}$ in $\L(\V,L^\infty((\R^3)^2)$.
Choosing any fixed box $\Lambda'\subset\Lambda$ we obtain from 
\req{omdiff-integrability} that
\bdm
   \int_{\Lambda'}\,
      \Bigl|
         \omtilde_\Lambda^{(2)}(R,0)-\omega_\Lambda^{(2)}(R,0)
         \,-\, \bigl((\partial\omega_\Lambda^{(2)})v\bigr)(R,0)
      \Bigr|\dR
   \,\leq\, C' \norm{v}_{\V}^2\,,
\edm
and by letting $|\Lambda|\to\infty$ this implies that
\bdm
   \int_{\Lambda'}
   \Bigl|
      \omtilde^{(2)}(R,0)-\omega^{(2)}(R,0)
      \,-\, \bigl((\partial\omega^{(2)})v\bigr)(R,0)
   \Bigr|\dR
   \,\leq\, C' \norm{v}_{\V}^2\,.
\edm
Since the box $\Lambda'\subset\R^3$ can be arbitrarily large
we thus have proved that 
$\partial\omega^{(2)}(\,\cdot\,,0)$ is also the Fr\'echet derivative
of $\omega^{(2)}(\,\cdot\,,0)$ in $\L\bigl(\V,L^1(\R^3)\bigr)$. 
\end{proof}

\section{Integral operator representations of 
$\boldsymbol{\partial\rho^{(1)}}$ and
$\boldsymbol{\partial\rho^{(2)}}$}
\label{Sec:operator}
In a finite size box $\Lambda\subset\R^3$, and 
with $u$ satisfying Assumption~\ref{Ass:u}, the derivatives 
$\partial\rho_\Lambda^{(m)}$ can be represented as integral operators acting
on $\V$. For $m=1,2$ these operators have been computed in 
\cite{Hank16a}, see also Lyubartsev and Laaksonen~\cite{LyLa95}: There holds
\begin{align}
\nonumber
   \bigl((\partial \rho_{\Lambda}^{(1)})v\bigr)(R_1)
   \,=\, &\,-\!\beta \int_\Lambda v(|R_1-R'|)\rho_{\Lambda}^{(2)}(R_1,R')\dR' \\[1ex]
\label{eq:drhodU1}
         &\,-\, \frac{\beta}{2}
                \int_\Lambda\int_\Lambda 
                   v(|R_1'-R_2'|) \chi_\Lambda^{(3)}(R_1,R_1',R_2')
                \dR_1'\!\dR_2'
\end{align}
with $\chi_\Lambda^{(3)}$ of \req{chi3},
and
\begin{align}
\nonumber
   \bigl((\partial \rho_\Lambda^{(2)})v\bigr)&(R_1,R_2)
   \,=\, - \beta\, v(|R_1-R_2|) \rho_\Lambda^{(2)}(R_1,R_2)\\[1ex]
\nonumber
        &- \beta
           \int_\Lambda v(|R_1-R'|)\rho_\Lambda^{(3)}(R_1,R_2,R')\dR' \\[1ex]
\nonumber
        &- \beta
           \int_\Lambda v(|R_2-R'|)\rho_\Lambda^{(3)}(R_1,R_2,R')\dR' \\[1ex]
\label{eq:drhodU2}
        &- \frac{\beta}{2}
           \int_\Lambda\int_\Lambda 
              v(|R_1'-R_2'|) \chi_\Lambda^{(4)}(R_1,R_2,R_1',R_2')
           \dR_1'\!\dR_2'
\end{align}
with $\chi_\Lambda^{(4)}$ of \req{chi4}.
We refer to \cite{LyLa95} and \cite{Hank16a} for physical interpretations 
of these representations.

The goal of this section is to show that the corresponding formulae 
for $\partial\rho^{(m)}$, $m=1,2$, are obtained by integrating over $\R^3$
instead, and by dropping all subscripts $\Lambda$, where
\begin{align*}
   \chi^{(3)}(R_1,R_2,R_3)
   &\,=\, \rho^{(3)}(R_1,R_2,R_3)
          \,-\, \rho^{(1)}(R_1)\rho^{(2)}(R_2,R_3)
\intertext{and}
   \chi^{(4)}(R_1,R_2,R_3,R_4)
   &\,=\, \rho^{(4)}(R_1,R_2,R_3,R_4)
           \,-\, \rho^{(2)}(R_1,R_2)\rho^{(2)}(R_3,R_4)\,,
\end{align*}

Concerning the verification of this assertion for the single integrals 
appearing in \req{drhodU1} and \req{drhodU2}
we utilize the following auxiliary result.

\begin{lemma}
\label{Lem:L1single}
Let $u$ satisfy Assumption~\ref{Ass:u}, 
and for some $R_0\in\R^3$ and $C>0$ let $\xi_\Lambda:\Lambda\to\R$ be a family of 
functions with
\be{C1}
   |\xi_\Lambda(R)| \,\leq\, Ce^{-\beta u(|R-R_0|)}\,, \qquad R\in\Lambda\,,
\ee
independent of $\Lambda$.
Moreover, let $\xi_\Lambda$ converge compactly to $\xi:\R^3\to\R$ as 
$|\Lambda|\to\infty$. Then for every $v\in\V$ there holds
\bdm
   \int_\Lambda v(|R-R_0|)\xi_\Lambda(R)\!\dR 
   \,\to\, \int_{\R^3} v(|R-R_0|)\xi(R)\!\dR
\edm
as $|\Lambda|\to\infty$.
\end{lemma}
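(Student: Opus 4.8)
The plan is to split the integrand into two regions dictated by the structure of the weight $e^{-\beta u(|R-R_0|)}$ and the perturbation space $\V$. First I would note that by the definition \req{V} of the norm on $\V$, any $v\in\V$ satisfies $|v(r)|\leq\norm{v}_\V\,|u(r)|$ for $r<s$ and $|v(r)|\leq\norm{v}_\V\,\uo(r)$ for $r\geq s$; combined with \req{C1} this gives the pointwise bound
\bdm
   |v(|R-R_0|)\,\xi_\Lambda(R)|
   \,\leq\, C\norm{v}_\V\,\bigl(|u(|R-R_0|)|+\uo(|R-R_0|)\bigr)\,
            e^{-\beta u(|R-R_0|)}\,, \qquad R\in\Lambda\,,
\edm
where on the region $|R-R_0|<s$ the function $\uo$ is replaced by $0$ and on $|R-R_0|\geq s$ the factor $|u|$ is replaced by $\uo$ (both restrictions only shrink the right-hand side, so the displayed bound is valid if one interprets $u$ and $\uo$ as extended by whichever of the two estimates applies). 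The point of this step is that the right-hand side is a fixed (i.e. $\Lambda$-independent) function on $\R^3$, so the heart of the matter is to verify it is integrable over $\R^3$; then compact convergence of $\xi_\Lambda$ to $\xi$ together with the dominated convergence theorem finishes the argument.

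The key integrability estimate is the following: the map $r\mapsto |u(r)|\,e^{-\beta u(r)}$ and the map $r\mapsto\uo(r)\,e^{-\beta u(r)}$ are both integrable against $r^2\rmd r$ on $(0,\infty)$. For the part near the origin, $r<s$, Assumption~\ref{Ass:u} gives $u(r)\geq\uu(r)>0$, so $e^{-\beta u(r)}\leq 1$ and also $|u(r)|e^{-\beta u(r)}=u(r)e^{-\beta u(r)}\leq\beta^{-1}e^{-1}\sup_{t>0}(t e^{-t})\cdot$const, i.e. $u(r)e^{-\beta u(r)}$ is bounded; hence $\int_0^s u(r)e^{-\beta u(r)}r^2\rmd r<\infty$ trivially (the integrand is bounded on a bounded interval). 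For the tail, $r\geq s$, we have $|u(r)|\leq\uo(r)$ and $u(r)\geq-\uo(r)$, so $e^{-\beta u(r)}\leq e^{\beta\uo(r)}$; but $\uo$ is decreasing with $\int_s^\infty\uo(r)r^2\rmd r<\infty$, which forces $\uo(r)\to 0$, so $e^{\beta\uo(r)}$ is bounded for $r\geq s$, and therefore $\int_s^\infty\uo(r)e^{-\beta u(r)}r^2\rmd r\leq C\int_s^\infty\uo(r)r^2\rmd r<\infty$. Passing from the radial statement to $\R^3$ via polar coordinates centered at $R_0$ shows that the dominating function above is in $L^1(\R^3)$.

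With an $L^1(\R^3)$ dominating function in hand, the rest is routine: fix an arbitrary bounded box $\Lambda'$; on $\Lambda'$ we have $\xi_\Lambda\to\xi$ uniformly, so $\int_{\Lambda'}v(|R-R_0|)\xi_\Lambda(R)\rmd R\to\int_{\Lambda'}v(|R-R_0|)\xi(R)\rmd R$; on $\Lambda\setminus\Lambda'$ the integral is bounded in absolute value by $\int_{\R^3\setminus\Lambda'}$ of the dominating function, which is smaller than any prescribed $\varepsilon$ once $\Lambda'$ is large, uniformly in $\Lambda$; the same bound controls $\int_{\R^3\setminus\Lambda'}v(|R-R_0|)\xi(R)\rmd R$ since $\xi$ inherits the pointwise bound \req{C1} in the limit. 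A standard $3\varepsilon$ argument then yields the claimed convergence. The only genuine obstacle is the integrability of the weight $|u|\,e^{-\beta u}$ near the origin against $r^2\rmd r$ — one must use that $u$ is bounded below near $0$ only by the possibly non-integrable $\uu$, but crucially the exponential suppression $e^{-\beta u}$ beats the polynomial growth of $u$, so $u e^{-\beta u}$ stays bounded; there is no need for any integrability of $\uu$ itself here (its non-integrability is what makes $\utilde$ stable, a different role). Everything else is bookkeeping with the definition of $\norm{\cdot}_\V$ and dominated convergence.
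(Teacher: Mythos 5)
Your proof is correct, but it takes a genuinely different route from the paper. The paper extends $\xi_\Lambda$ by zero, views $v\mapsto\int v(|R-R_0|)\xi_\Lambda(R)\dR$ as a uniformly bounded family of linear functionals on the weighted space $\Y_u$ (into which $\V$ embeds continuously by Lemma~3.1 of \cite{Hank16a}), checks convergence on the dense subset of compactly supported $v$, and concludes by Banach--Steinhaus. You instead build an explicit $\Lambda$-independent $L^1(\R^3)$ majorant for $v(|R-R_0|)\xi_\Lambda(R)$ and apply dominated convergence. Your key integrability estimate --- that $u\,e^{-\beta u}$ is bounded on $(0,s)$ because $u\geq\uu>0$ there, and that $\uo\,e^{-\beta u}\leq e^{\beta\uo(s)}\uo$ is integrable on $(s,\infty)$ --- is precisely a re-derivation of the cited embedding $\V\hookrightarrow\Y_u$ (your majorant is $C\norm{v}_{\Y_u}$ in disguise), so the mathematical content is the same; your version is self-contained and more elementary, the paper's is shorter given the imported lemma and reuses machinery already set up in Part~I. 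Your correct observation that the non-integrability of $\uu$ plays no role here, only the exponential suppression, is exactly the point of that embedding.

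One small repair to your final paragraph: the step ``on $\Lambda'$ we have $\xi_\Lambda\to\xi$ uniformly, so $\int_{\Lambda'}v\xi_\Lambda\to\int_{\Lambda'}v\xi$'' is not justified by uniform convergence alone when $R_0\in\Lambda'$, because $v(|\cdot-R_0|)$ need not be locally integrable near $R_0$ (it is only $|v|e^{-\beta u}$ that you have shown to be integrable, and $u$ may diverge at the origin). This is not a gap in substance --- pointwise convergence of $\xi_\Lambda$ plus your $L^1$ majorant lets you run dominated convergence directly on $\Lambda'$ (or on all of $\R^3$ at once, making the three-epsilon split unnecessary) --- but the justification should be phrased that way rather than via uniform convergence of $\xi_\Lambda$.
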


\begin{proof}
We extend $\xi_\Lambda$ by zero to $\R^3\setminus\Lambda$ and rewrite
\bdm
   \ell_\Lambda(v) 
   :=\int_\Lambda v(|R-R_0|)\,\xi_\Lambda(R)\!\dR
   = \int_{\R^3} 
        v(|R-R_0|)e^{-\beta u(|R-R_0|)}
        \bigl(\xi_\Lambda(R)e^{\beta u(|R-R_0|)}\bigr)\!
     \dR\,.
\edm
By virtue of \cite[Lemma~3.1]{Hank16a} $\V$ is continuously embedded into
the space 
$\Y_u$ of functions $v:\R^+\to\R$, for which the corresponding norm
\be{normY}
   \norm{v}_{\Y_u}\,:=\,\int_{\R^3} v(|R|)e^{-\beta u(|R|)}\dR
\ee
is finite.
In view of \req{C1} $\ell_\Lambda$ is a linear functional in $\Y_u'$, and 
$\{\ell_\Lambda\}_\Lambda\subset\Y_u'$ is uniformly bounded.
Furthermore, for $v\in\Y_u$ with compact support the compact convergence of 
$\xi_\Lambda\to\xi$ as $|\Lambda|\to\infty$ implies that 
\bdm
   \ell_\Lambda(v) \,\to\, \ell(v) \,=\, \int_{\R^3} v(|R-R_0|)\xi(R)\!\dR\,,
   \qquad |\Lambda|\to\infty\,,
\edm
hence, the assertion of the lemma follows from the Banach-Steinhaus theorem
for every $v\in\Y_u$, and hence, every $v\in\V$.
\end{proof}

To apply this result to \req{drhodU1} and \req{drhodU2} we need to estimate
the molecular distribution functions when their arguments get close.

\begin{proposition}
\label{Prop:rho-near-zero}
Let $u$ be a stable and regular pair potential and let $z$ satisfy \req{z}.
Then there exists $C>0$, independent of the size of $\Lambda$, such that
\be{rho-bound}
   \rho_\Lambda^{(m)}(\RR_m) \,\leq\, 
   C (ze^{2\beta B+1})^m \prod_{i=1}^{m-1} e^{-\beta u(|R_i-R_m|)}
\ee
for all $\RR_m\in\Lambda^m$ and all $m\geq 2$.
\end{proposition}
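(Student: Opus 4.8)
The plan is to bound $\rho_\Lambda^{(m)}$ directly from its graph expansion \req{rhom-graph}, using the Ruelle-type recursion that was already exploited in Section~\ref{Sec:Cluster}. Recall that $\rho_\Lambda^{(m)}(\RR_m)$ is a sum over $N\geq m$ of graph integrals associated with the graphs in $\gZZ_{\I_m,\J_{m,N}}$, $\I_m=\{1,\dots,m\}$, $\J_{m,N}=\{m+1,\dots,N\}$; that is, $\rho_\Lambda^{(m)}(\RR_m)=\sum_{N\geq m}\frac{z^N}{(N-m)!}\int_{\Lambda^{N-m}}\varphi_{\I_m,\J_{m,N}}(\RR_m;\RR_{m,N})\dRR_{m,N}$ where $\varphi$ is the Ruelle function \req{varphi-Ruelle}. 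The first step is therefore to obtain a pointwise bound on $|\varphi(\I_m;\J)|$ that isolates the factor $\prod_{i=1}^{m-1}e^{-\beta u(|R_i-R_m|)}$. To this end I would apply Ruelle's elimination recursion \req{Ruelle-recursion} but select the distinguished vertex $i^*$ to be \emph{vertex $\#m$} at the very first step (this is legitimate because $\varphi$ is symmetric in the labels of $\I$, so we are free to pick any white vertex). Eliminating vertex $\#m$ produces exactly the factor $d_{\I_m,m}=\prod_{i=1}^{m-1}e^{-\beta u(|R_i-R_m|)}$ out front, times $\sum_{\K\subset\J}k_\K(R_m)\,\varphi(\I_m\cup\K\setminus\{m\};\J\setminus\K)$.

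The second step is to bound what remains after factoring out $d_{\I_m,m}$. Here I no longer care about tracking the $e^{-\beta u}$ factors, so I switch to the \emph{stability} estimate \req{jstar} and the tree-graph machinery of Poghosyan--Ueltschi, exactly as in the derivation of \req{Ruelle-induction}: using \req{Ueltschi-recursion-tmp} together with \req{varphi-psi} one gets $\big|\sum_{\K\subset\J}k_\K(R_m)\varphi(\I_m\cup\K\setminus\{m\};\J\setminus\K)\big|\leq e^{2\beta B}\psi(\I_m;\J)$ after re-inserting the eliminated vertex, and then Ruelle's induction \req{Ruelle-induction} gives $\int_{\Lambda^{|\J|}}|\varphi(\I_m\setminus\{m\}\cup\cdots)|\,\dRR_\J\leq (|\J|)!\,\cbeta^{|\J|}(e^{2\beta B+1})^{N-2}$ or the analogous clean bound. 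Combining, $\int_{\Lambda^{N-m}}|\varphi_{\I_m,\J_{m,N}}|\dRR_{m,N}\leq d_{\I_m,m}(\RR_m)\cdot C\,(N-m)!\,\cbeta^{N-m}(e^{2\beta B+1})^{N-1}$ with a harmless constant $C$ absorbing the $e^{2\beta B}$ (and the slightly different initialization for small $N$).

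The third and final step is to sum the series. Plugging the per-$N$ bound into \req{rhom-graph} and cancelling the $(N-m)!$ gives
\bdm
   \rho_\Lambda^{(m)}(\RR_m) \,\leq\, C\,d_{\I_m,m}(\RR_m)\,(e^{2\beta B+1})^{m-1}
   \sum_{N=m}^\infty z^N\bigl(\cbeta e^{2\beta B+1}\bigr)^{N-m}
   \,=\, C'\,(ze^{2\beta B+1})^m\prod_{i=1}^{m-1}e^{-\beta u(|R_i-R_m|)}\,,
\edm
where the geometric series converges precisely because of the activity bound \req{z} ($z\cbeta e^{2\beta B+1}<1$, using $e^{2\beta B+1}\geq e>1$), and I would pull out $(ze^{2\beta B+1})^m$ by writing $z^N=(ze^{2\beta B+1})^m\cdot z^{N-m}(e^{2\beta B+1})^{-m}$ and bounding $(e^{2\beta B+1})^{m-1}/(e^{2\beta B+1})^m\leq 1$; the leftover $(e^{2\beta B+1})^{N-m}$ is the factor that joins $\cbeta$ and $z^{N-m}$ to make the convergent geometric series. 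Since every constant here — $B$, $\cbeta$ via \req{cbeta}, and the combinatorial $C$ — is independent of $\Lambda$, so is $C'$, which is the claimed uniformity.

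The main obstacle I anticipate is purely bookkeeping rather than conceptual: one has to make sure that choosing $i^*=\#m$ at the \emph{first} elimination step, and then reverting to the stability-based choice $j^*$ for all subsequent steps of the induction leading to \req{Ruelle-induction}, is consistent — in particular that the vertex $\#m$ never needs to be re-chosen and that the exponents of $e^{2\beta B+1}$ come out as $N-1$ rather than $N-2$ (this is exactly the ``careful initialization'' subtlety the authors already flag in their footnote about \req{varphi-psi}). Getting the small-$N$ cases ($N=m$, $N=m+1$) to fit under the single constant $C$ is where the off-by-a-bounded-factor errors hide, but they are absorbed harmlessly into $C'$.
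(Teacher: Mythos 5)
Your proposal is correct and follows essentially the same route as the paper: the paper's proof likewise applies Ruelle's recursion \req{Ruelle-recursion} with the distinguished vertex $i^*=m$ so that the factor $d_{\I_m,m}=\prod_{i=1}^{m-1}e^{-\beta u(|R_i-R_m|)}$ is kept explicit rather than bounded by stability, then controls the remaining terms via Ruelle's inductive estimate \req{Ruelle-induction} together with $\int_{\Lambda^{|\K|}}|k_\K|\dRR_\K\leq\cbeta^{|\K|}$, and finally sums the resulting geometric series using \req{z}. The only cosmetic difference is that the paper bounds the recursion's $\varphi$-terms directly with \req{Ruelle-induction} instead of detouring through the Poghosyan--Ueltschi $\psi$-functions, but the bookkeeping and the final constant come out the same way.
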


\begin{proof}
%
With $\I_m=\{1,\dots,m\}$, $\J_{m,N}=\{m+1,\dots,N\}$, and $i^*=m$
we conclude from \req{rhom-graph}, \req{varphi-Ruelle}, 
\req{Ruelle-recursion}, and \req{Ruelle-induction} that
\bdmal
   \bigl|\rho_\Lambda^{(m)}(\RR_m)\bigr|
   &\,\leq\,\sum_{N=m}^\infty \frac{z^N}{(N-m)!}(e^{2\beta B+1})^{N-2}\,
            d_{\I_m,m}(\RR_m)\, \cdot\\
   & \qquad \qquad
            \sum_{\K\subset\J_{m,N}} \!(N-m-|\K|)!\, \cbeta^{N-m-|\K|}
            \int_{\Lambda^{|\K|}}|k_\K(R_m;\RR_\K)|\!\dRR_\K \\[1ex]
   &\,\leq\,\sum_{N=m}^\infty \frac{z^N}{(N-m)!}(e^{2\beta B+1})^{N-2}\,
            d_{\I_m,m}(\RR_m) \cbeta^{N-m}\!\! 
            \sum_{\K\subset\J_{m,N}}\!\!(N-m-|\K|)!\,.
\edmal
The inner sum only depends on the number $p$ of elements in $\K$, 
$0\leq p\leq N-m$, hence
\be{rhom-bound}
\begin{aligned}
   \bigl|\rho_\Lambda^{(m)}(\RR_m)\bigr|
   &\,\leq\,\sum_{N=m}^\infty z^N(e^{2\beta B+1})^{N-2}\,
               d_{\I_m,m}(\RR_m)\cbeta^{N-m}
            \sum_{p=0}^{N-m} \frac{1}{p!} \\[1ex]
   &\,\leq\, \frac{ez^2}{1-z\cbeta e^{2\beta B+1}}\,(ze^{2\beta B+1})^{m-2}
             d_{\I_m,m}(\RR_m)\,. 
\end{aligned}
\ee
The assertion now follows by inserting the definition \req{dm} of $d_{\I_m,m}$.
\end{proof}

Combining Lemma~\ref{Lem:L1single} and Proposition~\ref{Prop:rho-near-zero}
we readily obtain the thermodynamical limits of the three single integrals
occurring in \req{drhodU1} and \req{drhodU2}. 
The thermodynamical limits of the remaining two double integrals 
involving $\chi_\Lambda^{(m)}$, $m=3,4$ in \req{drhodU1} and \req{drhodU2},
respectively, are more subtle and will be considered next.

Again, we start with an auxiliary result.

\begin{lemma}
\label{Lem:L1double}
Let $u$ satisfy Assumption~\ref{Ass:u}, and $\chi_\Lambda:\Lambda^2\to\R$ 
be a family of functions with
\be{fbar}
   |\chi_\Lambda(R_1,R_2)| 
   \,\leq\, e^{-\beta u(|R_1-R_2|)} \bigl(X(R_1)+X(R_2)\bigr)
\ee
for all $R_1,R_2\in\Lambda$,
where $X\in L^1(\R^3)$ is nonnegative and bounded and
does not depend on $\Lambda$. 
Furthermore, assume that $\chi_\Lambda$ converges compactly
to $\chi:(\R^3)^2\to\R$ as $|\Lambda|\to\infty$. Then for every
$v\in\V$ there holds
\bdm
   \int_\Lambda\int_\Lambda v(|R_1-R_2|)\chi_\Lambda(R_1,R_2)\dR_1\!\dR_2
   \,\to\, \int_{\R^3} \int_{\R^3} v(|R_1-R_2|)\chi(R_1,R_2)\dR_1\!\dR_2
\edm
as $|\Lambda|\to\infty$.
\end{lemma}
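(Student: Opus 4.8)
The plan is to mimic the proof of Lemma~\ref{Lem:L1single}, promoting the one-dimensional Banach--Steinhaus argument to the two-variable setting. First I would introduce the bilinear (really linear in $v$) functional
\bdm
   \ell_\Lambda(v) \,=\, \int_\Lambda\int_\Lambda v(|R_1-R_2|)\,\chi_\Lambda(R_1,R_2)\dR_1\!\dR_2\,,
\edm
extending $\chi_\Lambda$ by zero outside $\Lambda^2$, and rewrite the integrand by inserting $e^{-\beta u(|R_1-R_2|)}e^{\beta u(|R_1-R_2|)}=1$ so that
\bdm
   \ell_\Lambda(v) \,=\, \int_{\R^3}\int_{\R^3} v(|R_1-R_2|)e^{-\beta u(|R_1-R_2|)}\,\bigl(\chi_\Lambda(R_1,R_2)e^{\beta u(|R_1-R_2|)}\bigr)\dR_1\!\dR_2\,.
\edm
By \req{fbar} the bracketed factor is bounded by $X(R_1)+X(R_2)$ pointwise and uniformly in $\Lambda$.

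Next I would establish the uniform bound $|\ell_\Lambda(v)|\le c\,\norm{v}_{\Y_u}$ with $c$ independent of $\Lambda$, which is the analogue of the statement ``$\ell_\Lambda\in\Y_u'$ uniformly.'' Using the symmetry of the domain in $R_1,R_2$ and \req{fbar},
\bdm
   |\ell_\Lambda(v)| \,\leq\, 2\int_{\R^3}\int_{\R^3} |v(|R_1-R_2|)|\,e^{-\beta u(|R_1-R_2|)}\,X(R_1)\dR_1\!\dR_2\,.
\edm
Carrying out the $R_2$-integral first and substituting $R=R_2-R_1$ turns the inner integral into $\int_{\R^3}|v(|R|)|e^{-\beta u(|R|)}\dR=\norm{v}_{\Y_u}$, which is independent of $R_1$; the remaining $R_1$-integral is $\int_{\R^3}X(R_1)\dR_1=\norm{X}_{L^1(\R^3)}<\infty$. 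Hence $|\ell_\Lambda(v)|\le 2\norm{X}_{L^1}\norm{v}_{\Y_u}$, so $\{\ell_\Lambda\}_\Lambda$ is a uniformly bounded family in $\Y_u'$; since $\V\hookrightarrow\Y_u$ continuously by \cite[Lemma~3.1]{Hank16a}, the restrictions to $\V$ are uniformly bounded as well.

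It then remains to verify pointwise convergence $\ell_\Lambda(v)\to\ell(v)$ on a dense subset, after which Banach--Steinhaus delivers convergence on all of $\V$. For this I would take $v\in\Y_u$ (equivalently, $v$ whose radial profile has compact support and is bounded relative to $e^{-\beta u}$ near the origin) with $|v(|R|)|e^{-\beta u(|R|)}$ compactly supported in $R$; such $v$ are dense in $\Y_u$. Fix such a $v$ and write the difference $\ell_\Lambda(v)-\ell(v)$ as an integral over $(\R^3)^2$ of $v(|R_1-R_2|)e^{-\beta u(|R_1-R_2|)}$ times $\chi_\Lambda e^{\beta u}-\chi e^{\beta u}$ (again extending $\chi_\Lambda$ by zero). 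The integrand is dominated, uniformly in $\Lambda$, by $2|v(|R_1-R_2|)|e^{-\beta u(|R_1-R_2|)}\bigl(X(R_1)+X(R_2)\bigr)$, which is integrable over $(\R^3)^2$ by the same Fubini computation as above with $\norm{v}_{\Y_u}$ replaced by the value contributed on the compact support of $v$. By compact convergence $\chi_\Lambda\to\chi$ and boundedness of $e^{\beta u}$ on the (bounded in $R_1-R_2$, hence on compacta bounded in each variable) relevant region, the integrand converges to zero pointwise a.e.; dominated convergence then gives $\ell_\Lambda(v)\to\ell(v)$. The main obstacle to watch is that ``compact convergence'' of $\chi_\Lambda$ controls the integrand only on bounded subsets of $(\R^3)^2$, so one must argue that the tails are uniformly small --- which is exactly what the $L^1$-domination by $|v(|R_1-R_2|)|e^{-\beta u(|R_1-R_2|)}(X(R_1)+X(R_2))$ provides, with $X\in L^1$ doing the essential work; the compact support of $v$ is convenient but the integrability of $X$ is what really matters. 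Applying Banach--Steinhaus to the uniformly bounded family $\{\ell_\Lambda\}\subset\V'$ and using density of such $v$ in $\Y_u\supset\V$ completes the proof.
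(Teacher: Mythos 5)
Your proof is correct, and the two ingredients it rests on --- the domination \req{fbar} with $X\in L^1(\R^3)$ and the compact convergence of $\chi_\Lambda$ --- are exactly the ones the paper uses. The organization is genuinely different, though. You transplant the Banach--Steinhaus scheme from the paper's proof of Lemma~\ref{Lem:L1single}: uniform boundedness of $\ell_\Lambda$ on $\Y_u$ via Fubini, pointwise convergence on a dense class via dominated convergence, then the density argument. The paper instead changes variables to $R_1'=R_1-R_2$, $R_2'=R_1+R_2$, factors out $v(|R_1'|)e^{-\beta u(|R_1'|)}\in L^1(\R^3)$, and proves by hand that the inner difference integral $J(R_1')$ is uniformly bounded by $32\norm{X}_{L^1(\R^3)}$ and converges compactly to zero, via an explicit decomposition of the $R_2'$-domain into a spherical shell (where compact convergence of $\chi_\Lambda$ acts) and its complement (controlled by the $L^1$ tails of $X$). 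Your route is softer and shorter; the paper's is more quantitative and yields the compact convergence of the partially integrated quantity $J$ as a by-product. Two small remarks on your write-up: the dominating function $2|v(|R_1-R_2|)|e^{-\beta u(|R_1-R_2|)}\bigl(X(R_1)+X(R_2)\bigr)$ is integrable over $(\R^3)^2$ for \emph{every} $v\in\Y_u$ (your own Fubini computation shows this), so the restriction to compactly supported $v$ and the subsequent density/Banach--Steinhaus step are redundant --- dominated convergence already gives $\ell_\Lambda(v)\to\ell(v)$ for all $v\in\Y_u$ along any sequence of boxes. Also, the appeal to ``boundedness of $e^{\beta u}$'' is a red herring: the factors $e^{\pm\beta u(|R_1-R_2|)}$ cancel in the integrand, whose pointwise a.e.\ convergence follows directly from the pointwise convergence $\chi_\Lambda\to\chi$ (note that $e^{\beta u}$ is in fact unbounded near coincident arguments under Assumption~\ref{Ass:u}, so the cancellation, not boundedness, is what saves you).
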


\begin{proof}
Throughout, we extend $\chi_\Lambda$ by zero to $(\R^3)^2\setminus\Lambda^2$,
and this extension, of course, satisfies \req{fbar} for all 
$R_1,R_2\in\R^3$. Because of the compact convergence $\chi_\Lambda\to \chi$ as
$|\Lambda|\to\infty$ this inequality also extends to $\chi$, i.e., 
\be{fbarf}
   |\chi(R_1,R_2)| \,\leq\, e^{-\beta u(|R_1-R_2|)} \bigl(X(R_1)+X(R_2)\bigr)\,,
   \qquad
   R_1,R_2\in\R^3\,.
\ee 

Substituting $R_1'=R_1-R_2$ and $R_2'=R_1+R_2$ we obtain
\bdmal
   &\int_\Lambda\int_\Lambda v(|R_1-R_2|) \chi_\Lambda(R_1,R_2)\dR_1\!\dR_2\\[1ex]
   &\qquad
    \,=\, \frac{1}{8}\int_{\R^3} v(|R_1'|)e^{-\beta u(|R_1'|)}
             \int_{\R^3} e^{\beta u(|R_1'|)}
                \chi_\Lambda\bigl(\tfrac{R_1'+R_2'}{2},
                                  \tfrac{R_2'-R_1'}{2}\bigr)
             \dR_2'\!\dR_1'\,,
\edmal
and hence, since $v\in \V\subset\Y_u$, compare~\req{normY},
it only remains to show that
\bdm
   J(R_1')
   \,=\, \int_{\R^3} e^{\beta u(|R_1'|)}
            \Bigl(
               \chi_\Lambda\bigl(\tfrac{R_1'+R_2'}{2},
                                 \tfrac{R_2'-R_1'}{2}\bigr)
               \,-\, \chi\bigl(\tfrac{R_1'+R_2'}{2},\tfrac{R_2'-R_1'}{2}\bigr)
            \Bigr) \dR_2'
\edm
is uniformly bounded for $R_1'\in\R^3$, and converges compactly to zero.
The uniform boundedness follows readily from \req{fbar} and \req{fbarf},
since
\bdm
   |J(R_1')|
   \,\leq\, 2\int_{\R^3} X(\tfrac{R_1'+R_2'}{2})\dR_2'
            \,+\, 2\int_{\R^3} X(\tfrac{R_2'-R_1'}{2})\dR_2'
   \,=\, 32\, \norm{X}_{L^1(\R^3)}\,. 
\edm

To prove the compact convergence $J\to 0$ we introduce for $r'>0$ the 
spherical shell $\A_{r'}=\{1/r'\leq|R_2'|\leq r'\}$, and estimate
\bdmal
   |J(R_1')|
   &\,\leq\, \int_{\A_{r'}} e^{\beta u(|R_1'|)}
               \Bigl|
                 \chi_\Lambda\bigl(\tfrac{R_1'+R_2'}{2},
                                   \tfrac{R_2'-R_1'}{2}\bigr)
                 \,-\, \chi\bigl(\tfrac{R_1'+R_2'}{2},
                                 \tfrac{R_2'-R_1'}{2}\bigr)
               \Bigr| \dR_2'\\
   &\qquad\quad
             \,+\, 
             \int_{\R^3\setminus\A_{r'}} e^{\beta u(|R_1'|)}
               \Bigl(
               \bigl|\chi_\Lambda\bigl(\tfrac{R_1'+R_2'}{2},
                                       \tfrac{R_2'-R_1'}{2}\bigr)\bigr|
               \,+\, \bigl|\chi\bigl(\tfrac{R_1'+R_2'}{2},
                                     \tfrac{R_2'-R_1'}{2}\bigr)
                     \bigr|\Bigr) \dR_2'\\[1ex]
   &\,\leq\, \int_{\A_{r'}} e^{\beta u(|R_1'|)}
               \Bigl|
                 \chi_\Lambda\bigl(\tfrac{R_1'+R_2'}{2},
                                   \tfrac{R_2'-R_1'}{2}\bigr)
                 \,-\,\chi\bigl(\tfrac{R_1'+R_2'}{2},\tfrac{R_2'-R_1'}{2}\bigr)
               \Bigr| \dR_2'\\
   &\qquad\quad
             \,+\, 
             2 \int_{\R^3\setminus\A_{r'}}\!\!
                   \Bigl(X(\tfrac{R_1'+R_2'}{2}) 
                         + X(\tfrac{R_2'-R_1'}{2})\Bigr) \dR_2'\\[1ex]
   &\,\leq\, \int_{\A_{r'}} e^{\beta u(|R_1'|)}
               \Bigl|\chi_\Lambda\bigl(\tfrac{R_1'+R_2'}{2},
                                       \tfrac{R_2'-R_1'}{2}\bigr)
                 \,-\,\chi\bigl(\tfrac{R_1'+R_2'}{2},\tfrac{R_2'-R_1'}{2}\bigr)
               \Bigr| \dR_2'\\
   &\qquad\quad
             \,+\, 32 \int_{|R|>r} X(R)\dR
             \,+\, 16 \int_{\B_+} X(R)\dR \,+\, 16 \int_{\B_-} X(R)\dR\,,
\edmal
where $r=(r' - |R_1'|)/2$
and $\B_\pm=\{R:|R\pm R_1'/2|<1/(2r')\}$.
Given a compact set $\Omega\subset\R^3$ and any $\eps>0$ we can fix $r'$ 
so large that the sum of the latter three integrals is 
bounded by $\eps/2$ for every $R_1'\in\Omega$. Moreover, for $R_1'\in\Omega$ 
and $R_2'\in\A_{r'}$ we can use the compact convergence 
of $\chi_\Lambda$ to also bound the former integral by $\eps/2$ by choosing 
$|\Lambda|$ sufficiently large. Thus we have shown that
\bdm
   |J(R_1')| \,\leq\, \eps \qquad \text{for all $R_1'\in \Omega$}\,,
\edm
provided that $|\Lambda|$ is sufficiently large. In other words, there holds
$J\to 0$ as $|\Lambda|\to\infty$, uniformly in $\Omega$, which was to be shown.
\end{proof}

This result, together with the estimates of $\chi_\Lambda^{(3)}$ and
$\chi_\Lambda^{(4)}$ in Propositions~\ref{Prop:chi3} and \ref{Prop:chi4},
respectively, shows that the double integrals~\req{drhodU1}
and \req{drhodU2} have a well-defined thermodynamical limit.
In particular, taking into account that the thermodynamical limits of
$\rho^{(m)}$ are even and translation invariant functions, we find that
\be{rhodU2-limit}
\begin{aligned}
   \bigl((\partial \rho^{(2)})v\bigr)(R,0)
   \,=\,&-\beta\, v(|R|) \rho^{(2)}(R,0)
         \,-\, 2\beta
               \int_{\R^3} v(|R'|)\rho^{(3)}(R,0,R')\dR' \\[1ex]
        &- \frac{\beta}{2}
           \int_{\R^3} v(|R'|)
           \int_{\R^3}
              \chi^{(4)}(R,0,R'',R''+R')
           \dR''\!\dR'\,,
\end{aligned}
\ee
with all integrals converging absolutely.


\end{document}